\newcommand{\remove}[1]{}
\newtheorem{theorem}{Theorem}
\newtheorem{proposition}{Proposition}
\newtheorem{remark}{Remark}
\newtheorem{algorithm}{Algorithm}
\newtheorem{assumption}{Assumption}
\begin{document}

 \title{Gibbsian On-Line Distributed Content Caching Strategy for Cellular Networks 
 \thanks{Email: achattop@usc.edu, Bartek.Blaszczyszyn@ens.fr, keeler@wias-berlin.de}
 \thanks{This work was done when Arpan Chattopadhyay was working in INRIA, Paris as a postdoc.}
}

\author{
Arpan~Chattopadhyay, Bart{\l}omiej~B{\l}aszczyszyn, H. Paul Keeler \\
}

\maketitle
\thispagestyle{empty}

\begin{abstract}
In this paper, we develop Gibbs sampling based techniques for learning the optimal placement of contents in a cellular network. 
We consider the situation where a {\em finite collection} of base stations are scattered on the plane, each covering a   
cell (possibly overlapping with other cells). Mobile users request for downloads from a finite set of contents according 
to some popularity distribution which may be known or unknown to the base stations. Each base station has a fixed memory 
space that can store only a strict subset of the contents at a time; hence, if a user requests for a content 
that is not stored at any of its serving base stations, the content has to be downloaded from the backhaul. Hence, we consider  
the problem of optimal content placement which minimizes the rate of download from the backhaul, or equivalently 
maximize the cache hit rate. It is known that, when multiple cells 
can overlap with one another (e.g., under dense deployment of base stations in small cell networks), it is not optimal 
to place the most popular contents in each base station. However, the optimal content placement problem is NP-complete. 
Using  ideas of Gibbs sampling, we propose simple sequential content update rules that decide whether to 
store a content 
at a base station (if required from the base station) and which content has to be removed from the corresponding cache, 
based on the knowledge of contents stored in its neighbouring 
base stations. 
The update rule is shown to be asymptotically converging 
to the optimal content placement for all nodes under the knowledge of content popularity. Next, we extend the algorithm to 
address the situation where content popularities and cell topology are initially unknown, but are estimated as new requests arrive to the base 
stations; we show that our algorithm working with the running estimates of content popularities and cell topology also converges asymptotically 
to the optimal content placement. Finally, we   demonstrate  the improvement in cache hit rate 
compared to most popular content placement and independent content placement strategies via numerical exploration.
\end{abstract}

\begin{keywords}
Content caching, on-line cache update, cellular network, hit rate maximization, Gibbs sampling.
\end{keywords}

\vspace{-5mm}
\section{Introduction}\label{section:introduction}
The proliferation of smartphones and tablets equipped with 3G and 4G connectivity 
and the fast growing demand for downloading multimedia files 
have resulted in severe overload in the internet backhaul, 
and it is expected to be worse with the advent of 5G in near future. 
Recent idea of densifying 
cellular networks will improve wireless throughput, but this will 
eventually push the backhaul bandwidth to its limit. In order to alleviate 
this problem, the idea of caching popular multimedia contents has recently been proposed. 
Given the fact that the popular contents 
are requested many times which results in network congestion, one way to 
reduce the congestion is to cache the popular contents 
at various intermediate nodes in the network. In case of cellular network, 
this requires adding physical memory to base stations (BSs): 
macro, micro, nano and pico. This has several advantages: 
(i) Caching contents at base stations reduce backhaul load. 
(ii) Caching 
reduces delay in fetching the content, thereby reducing the multimedia playback time. 
(iii) Caching will allow the end user to 
download a lower quality content in case his channel quality or bad 
or in case he wants to control his total amount of download.

Under dense placement of base stations, it is often the case that the {\em cells} 
(a cell is defined to be a region around a BS where 
the user is able to get sufficient downlink data rate from the BS) of different 
BSs might overlap with each other in an arbitrary manner 
(see \cite{bartek-keeler15sinr-process-poisson-networks-factorial-moment-measures}). 
Hence, if a user is covered by multiple BSs, she has the option to download a content 
from any one of the serving BSs. This gives rise to the 
problem of optimal content placement in the caches of cellular BSs (see \cite{bartek14caching}, \cite{femtocaching}); 
the trade-off is that ideally the caching strategy should avoid placing the same 
content  in two BSs whose cells have a significant overlap, 
while it is not desirable for the non-overlapped region.\footnote{However, this claim does not hold when multiple base stations having their own caches cooperate not only at the cache level but also at the signal level; see \cite{xu-etal16cooperative-tx} and 
\cite{zhen-etal16cooperative-caching}. We consider only cache-level multi-cell cooperation in our current paper.} 
{\em Optimal content placement under such situation requires global knowledge of base station locations and cell topologies, 
and solving the optimization problem requires intensive computation.  In order to tackle these problems, we develop 
sequential cache update algorithms motivated by Gibbs sampling, that asymptotically lead to optimal content placement, where each base station updates 
its contents only when a new content is downloaded from the backhaul to meet a user request, and this update is 
made solely based on the knowledge of the neighbouring BSs whose cells have nonzero intersection with the 
cell of the BS under consideration. The results are also extended to the case where the content popularities and cell topology 
are unknown initially and 
are learnt over time as new content requests arrive to the base stations.}  
Numerical results demonstrate the improvement in cache hit rate using Gibbs sampling technique for cache update, 
compared to most popular content placement and independent content placement strategies in the caches.

\vspace{-5mm}
\subsection{Related Work}\label{subsection:related_work}
There have been considerable amount of work in the literature dedicated to cellular caching. Benefits and challenges 
for caching in 5G networks have been described in \cite{wang14caching}. The authors of \cite{martina-14caching} 
have developed a method to analyze the  performance of caches (isolated or networked), and shown that placing the most 
popular subset of contents in each cache is not optimal in case of interconnected caches. The paper 
\cite{femtocaching} deals with optimal content placement in wireless caches given BS-user association. 
The authors of \cite{poularakis13exploiting} have addressed the problem of optimal content placement under user mobility. 
The authors of \cite{bartek14caching} have proposed a randomized content placement scheme in cellular BS caches 
in order to maximize cache hit rate, but their 
scheme assumes that the  contents are placed independently across the caches, which is obviously suboptimal. This work was later extended 
to the case of heterogeneous networks in \cite{serbetci16caching}. The authors of \cite{liu16caching} 
have again considered independent probabilistic caching in a random heterogeneous network. 
The paper \cite{avrachenkov16caching} has addressed the problem of cache miss minimization in a random network setting. 
The authors of \cite{debbah2016caching} 
have studied the problem of distributed caching in
ultra-dense wireless small cell networks using mean field games; however, 
this formulation requires us to take base station density to infinity (which may not be 
true in practice), and it does not provide any guarantee on the optimality of this caching strategy. 
The paper \cite{naveen-postdoc-paper} proposes a pricing based scheme for jointly assigning content requests to cellular BSs and 
updating the cellular caches; but this paper focuses on certain cost minimization instead of hit rate maximization, 
and it is optimal only when we can represent the data by very large number of chunks which can be used in employing rateless code. The problem of collaborative but decentralized caching among small base stations for a certain cost minimization has been analyzed in \cite{pantisano-etal14in-network-caching}, under the assumption that the caches have access to the contents of other caches connected to the same gateway; their formulation involves a certain cost for retrieval of a content from another cache. The authors of \cite{chiang-liao16encore-energy-aware-caching} 
address the problem of minimizing energy consumption under multi-cell transmission cooperation for interference reduction and content caching in heterogeneous networks. Since content providers might have to pay cellular network operators for caching their contents, an important question is how to cache contents among multiple base stations to that the content placement charge is minimized; this problem has been addressed by the work reported in \cite{gharaibeh-etal16efficient-online-collaborative-caching}. 
The authors of \cite{ostovari-etal16collaborative-caching} have considered the problem of collaborative content  placement at caches of multiple base stations, but under the assumption that cache sizes at base stations are unlimited. The paper \cite{jiang-etalYYoptimal-cooperative-content-caching}  discusses cooperative content caching and delivery policy among multiple base stations. The paper \cite{avrachenkov2017low} provides a fast but suboptimal solution based on potential game formulation, to the problem of minimizing cache miss rate when multiple base stations have overlapping cells. \cite{avrachenkov2017low} also provided one simulated annealing-based algorithm (different from our Gibbs sampling approach) that minimizes the cache miss rate.

The paper \cite{bastug-etal14cache-enabled} analyzes a stochastic geometry framework where cache-enabled small base stations are randomly placed on infinite two dimensional plane, and calculated the expressions for the outage probability of a typical user (jointly in terms of SINR and content availability at the cache), as well as the delivery rate. The authors of \cite{bastug-etal16edge-caching}, for a randomly 
deployed heterogeneous network, derive approximate expressions for the average delivery rate considering inter-tier and intra-tier dependence.  The authors of \cite{bastug-etal16geographical-caching} 
analyze the average delay of users for a random two-tier network under perfect knowledge of content popularity distribution and randomized caching policies. All of these papers consider a stochastic geometry framework for base station locations, and assume limited backhaul. However, in our current paper, we consider known placement of a finite number of base stations,  and seek to maximize the cache hit rate over the entire network; {\em thus, our work seeks to reduce the load in the backhaul without imposing a hard constraint on the backhaul capacity}. It is worth mentioning that, under this setting, we provide decentralized cache update schemes which are hit-rate optimal for a finite network in a time-average sense.

The authors of \cite{bharath16learning-caching} and \cite{leconte16placing-dynamic-content-caches} 
propose learning schemes for unknown time-varying 
popularity of contents, but their scheme does not have theoretical guarantee of convergence 
to the optimal content placement {\em across the network when cells of different BSs overlap with each other}. 
The paper \cite{moharir14high-dimensional} establishes that, when popularity is dynamic, any scheme that separates 
content popularity estimation and cache update (i.e., control) phases is strictly order-wise suboptimal in terms of hit rate. A big data approach has been taken in \cite{bastug-etal15big-data-caching} for estimating content popularities empirically from mobile traffic data collected from a telecom operator. The authors of \cite{neglia2017cache} proposed simulated annealing based caching for a single cache, and also addressed the issue of unknown content popularities by proposing an algorithm that avoids direct popularity estimation. 

Contrary to the prior literature, our current paper provides theoretical guarantee of  convergence for an optimal distributed  cellular 
cache update scheme   that maximizes the time-average cache hit rate  over the network involving  caches in multiple  base stations with overlapping cells; this minimizes the amount of data downloaded from the backhaul. The results also hold when popularities and cell topology are unknown initially and are learnt over time using 
the information of request arrivals in the base stations.

\vspace{-5mm}

\subsection{Organization and Our Contribution}\label{subsection:our_contribution}
The rest of the paper is organized as follows. 

\begin{itemize}
\item The system model has been defined in Section~\ref{section:system-model}.
\item In Section~\ref{section:given-temperature}, we propose an update scheme for the caches based on the knowledge of the 
contents cached in neighbouring BSs. The update scheme is based on Gibbs sampling techniques, and cache updates are 
made only when new content requests arrive. The scheme asymptotically converges to a near-optimal content placement 
in the network, since the scheme is proposed for a finite ``inverse temperature'' to be defined later. We prove convergence 
of the proposed scheme. To the best of our knowledge, such a scheme has never been 
used in the context of  caching in cellular network.
\item In Section~\ref{section:varying-inverse-temperature}, we discuss how to slowly increase the inverse  temperature to 
$\infty$ so that the near-optimal limiting solution in Section~\ref{section:given-temperature} actually 
converges to the globally optimal solution. We provide rigorous proof for the convergence of this scheme.
\item In Section~\ref{section:learning-popularities}, we discuss how to adapt the update schemes to the situation 
when unknown content popularities and cell topology are learnt over time as new content requests arrive to the BSs over time.
\item In Section~\ref{section:numerical}, we numerically demonstrate that the proposed Gibbs sampling approach has the potential 
to significantly improve the cache hit rate in cellular networks.
\item Finally, we conclude in Section~\ref{section:conclusion}.
\end{itemize}

\vspace{-5mm}

\section{System Model and Notation}\label{section:system-model}
\vspace{-5mm}

\subsection{Network Model}\label{subsection:network-model}
We consider a {\em finite} set $\mathcal{N}:=\{1,2,\cdots,N \}$ of  base stations (BSs) on the two-dimensional Euclidean space. The location of the base stations are {\em deterministic and arbitrary}; for example, the locations could come from a given realization of a point process over a finite geographical region.  
The set of points covered by a BS 
constitute the {\em cell} of the corresponding BS. This coverage could be signal to noise ratio (SNR) based coverage 
where a point is covered 
by a BS if and only if the SNR at that point from the BS exceeds some threshold. 
We denote 
the cell of BS~$i$ ($1 \leq i \leq N$) by $\mathcal{C}_i$. Let us define $\mathcal{C}:=\cup_{i=1}^N \mathcal{C}_i$.   
The area of any subset $\mathcal{A}$ of $\mathbb{R}^2$ is denoted 
by $|\mathcal{A}|$. We allow the cells of various BSs to have arbitrary and different {\em finite} areas. 
The cells of two BSs might have a nonzero intersection; any downlink 
mobile user located at such an intersection is covered by more than one BS. 
Let us denote by $2^{\mathcal{N}}$ the collection of all subsets of $\mathcal{N}$, and let $s$ denote one such generic subset. 
Let us denote by 
$\mathcal{C}(s):=(\cap_{i \in s} \mathcal{C}_i ) \cap ( \cup_{i \notin s} \mathcal{C}_i )^{c}$ the region in $\mathcal{C}$ 
which is covered only by the BSs from the subset $s$. 
See Figure~\ref{fig:caching-cell-diagram} for a better understanding of the cell model.

\begin{figure}[!t]
\begin{center}
\includegraphics[height=4cm, width=5cm]{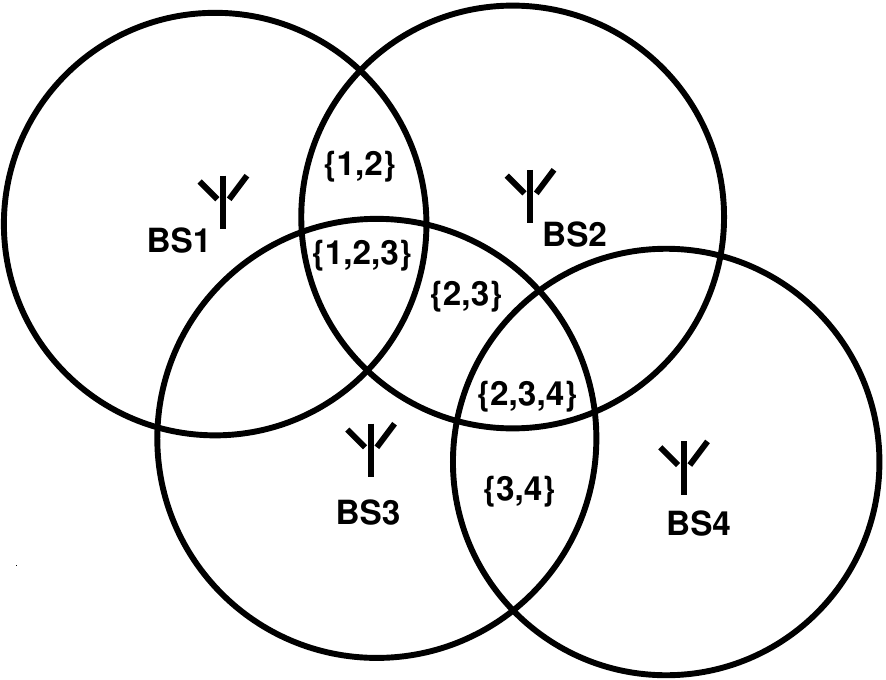}
\end{center}
\caption{A pictorial description of the base station  coverage model. In the diagram, four BSs are shown with numbers 
$1,2,3,4$. The circles correspond to the cells of the BSs. 
The region marked as $\{1,2,3\}$ has $s=\{1,2,3\}$; i.e., this region is called $\mathcal{C}(\{1,2,3\})$ 
and it is covered only by BSs $\{1,2,3\}$ and no other BS. Similar meaning 
applies to other regions.}
\label{fig:caching-cell-diagram}
\vspace{-5mm}
\end{figure}

\vspace{-5mm}

\subsection{Content Request Process}\label{subsection:content-request}
Contents from a set $\mathcal{M}:=\{1,2,\cdots,M\}$ are requested by users located inside $\mathcal{C}$. {\em We assume that each of these 
contents have the same size, though we will explain at the end of Section~\ref{section:given-temperature} how to easily take care of unequal content sizes in our analysis. }
Content~$i$ ($1 \leq  i \leq M$) is  
requested by users according to a homogeneous Poisson point process 
in space (inside $\mathcal{C}$) and time with intensity $\lambda_i$; this is the expected 
number of requests for content~$i$ per second per square meter inside $\mathcal{C}$. Let $\lambda:=\sum_{i=1}^M \lambda_i$. Note that, 
$\frac{\lambda_i}{\lambda}$ denotes the probability that a content request is for content~$i$; in other words, 
$\frac{\lambda_i}{\lambda}$ is the popularity of content~$i$. We also assume that $\lambda_1 \geq \lambda_2 \geq \cdots \geq \lambda_M$.

It is worth mentioning that the cache request process  essentially follows the popularly known  independent request model (IRM) as described in \cite{paschos-etal16wireless-caching}. \footnote{In case  content popularities are time-varying (e.g., under the shot noise model as described in \cite{paschos-etal16wireless-caching}), our proposed scheme in Section~$\ref{section:learning-popularities}$ for cache update while learning content popularities by observing the content request arrival process, will work fine so long as the content popularites change at a rate slow enough so that the popularity estimates converge to the actual popularity  between two successive changes in the content popularity.} In our model, 
the total request arrival process to the system is a time-homogeneous Poisson process with intensity  $\lambda |\mathcal{C}|$. The only difference with IRM is that unlike IRM, the popularity of content~$i$ in our model, $\frac{\lambda_i}{\lambda}$, can be arbitrary and do not necessarily follow a power law; this makes our request arrival model more general. As an example, this model is valid when users are scattered on the two-dimensional plane according to a homogeneous Poisson point process, and each user is generating content requests according to a time-homogeneous Poisson process, and any new request is for content~$i$ with probability $\frac{\lambda_i}{\lambda}$.

\vspace{-5mm}

\subsection{Content Caching at BSs}\label{subsection:content-caching}
We assume that each BS can store $K$ number of contents, where $K < M$. Let $B$ denote a generic configuration of content placement in caches of the 
network. $B$ is defined as a $M \times N$ matrix with $B_{i,j}=1$ if content~$i$ is stored at the cache of BS~$j$, and 
$B_{i,j}=0$ otherwise. Note that, any feasible $B$ must satisfy $\sum_{i=1}^M B_{i,j}=K$ for all $j \in \{1,2,\cdots,N \}$; we rule out the possibility 
of $\sum_{i=1}^M B_{i,j}<K$ since that will be a waste of cache memory resources in BSs. 
Let us denote the set of all feasible configurations by $\mathcal{B}$. 
Clearly, the cardinality of $\mathcal{B}$ is ${{M}\choose{K}}^N$. 
Apart from $B$, we will also use the symbol $A$ for a generic configuration belonging to set $\mathcal{B}$.

\vspace{-5mm}

\subsection{Cache Hit Rate Maximization Problem}\label{subsection:content-requests-hit-rate}
We assume that, whenever a new request for a content arrives, it is served 
by one  BS covering that point and having the content in its cache; if a content request is served from the cache, we call  
the event as a {\em cache hit}. In case no covering BS has the content (i.e., no cache hit, or {\em cache miss}), 
the content needs to be downloaded by one of the covering BSs and served to the user 
(this will be explained later). The requests do not tolerate any delay; i.e., we do not consider 
the possibility of holding the requests in a queue and serving the content to users in batch 
once the content becomes available in a BS. Also, we assume infinite bandwidth available for all downlink transmissions; i.e., 
each content is assumed to be served instantaneously.\footnote{This is a valid assumption when the downlink traffic in the network is light. Even under heavy traffic, in a small cell network, downlink capacity is typically high and the number of users per cell is small. On the other hand, a backhaul link typically serves many base stations. This makes the backhaul capacity a major bottleneck. While  a joint optimization of throughput or delay considering the availability of contents in the caches and considering instantaneous backhaul traffic load and downlink traffic load is highly useful, the problem becomes too hard in general when there are a lot of small cells over a large geographical region. Hence, we decide to decouple the caching problem and downlink load management problem at each cell, which is equivalent to infinite downlink bandwidth assumption for the caching problem. It is important to note that, even with this simplification, there remains significant challenge in the problem of content assignment to the caches.

Limited downlink capacity was considered in prior work such as \cite{bastug-etal14cache-enabled} and \cite{bastug-etal16edge-caching}, but they did not propose optimal caching strategy for base stations with overlapping cells.}

Let the random variable $H_B$ denote the number of cache hits in the entire network in unit time, under configuration $B$. 
We define the cache hit rate 
$h(B)=\mathbf{E}(H_B)$ where the expectation is over the randomness in the content request arrival process.  Clearly, 
\begin{eqnarray}
 h(B)=\sum_{s \in 2^{\mathcal{N}}} 
 |\mathcal{C}(s)| \sum_{i=1}^M \lambda_i  \mathbf{1}\{\sum_{j \in s} B_{i,j} \geq 1\}. 
 \label{eqn:expanded-expression-of-hit-rate}
\end{eqnarray}

In this paper, we are interested in finding an optimal configuration which achieves: 
\begin{equation}
\sup_{B \in \mathcal{B}}h(B). \label{eqn:objective-function}
\end{equation}

Cache hit rate has been considered as the objective function in prior literature; see \cite{avrachenkov16caching}, \cite{bartek14caching}, \cite{martina-14caching} and \cite{sengupta-etal14learning-distributed-caching} for reference. The authors of \cite{sengupta-etal14learning-distributed-caching} considered  hit rate maximization under coded caching. However, one can consider other objective functions such as latency in content delivery as in \cite{tandon-simone16cloud-aided-edge-caching}; we choose cache hit rate  since it is a commonly used objective function. Cache hit rate is a suitable objective function when the requested content needs to be served instantaneously; if the requests are delay-tolerant, then queueing of the requests and contents are allowed and there latency in content delivery would be a more suitable objective function.  It is worth mentioning that, in case requests and contents are allowed to be queued at the base station, there is no formal proof that maximizing hit rate will minimize the latency in content delivery, though intuitively one can expect so.

\eqref{eqn:objective-function} is  an optimization problem with $0-1$ integer variables, 
nonlinear objective function 
and linear constraints. This class of problems has been shown to be NP-complete (see \cite{karp-complexity-paper}), 
and hence, we cannot expect any 
polynomial time algorithm to solve \eqref{eqn:objective-function}. Hence, in this section, we provide iterative, distributed 
cache update scheme that asymptotically solves the problem. However, since the algorithm is iterative, we cannot use the optimal 
configuration over infinite time horizon. Hence, we seek to design a randomized iterative cache update scheme which yields 
\begin{equation}\label{eqn:asymptotic-target}
 \liminf_{T \rightarrow \infty} \frac{\int_0^T \mathbf{E}(h(R(\tau))) d \tau}{T} = \sup_{B \in \mathcal{B}}h(B).
\end{equation}
where $R(\tau) \in \mathcal{B}$ is the configuration of all caches in the network at time $\tau$. Our iterative scheme is randomized, which renders 
$R(\tau)$ a random variable; hence, we work with the expectation $\mathbf{E}$.

It is important to note that, by  maximizing the cache hit rate, we seek to minimize the download rate from the backhaul; this is necessary because backhaul capacity is limited in practice, and,  also, downloading a content from a server via the backhaul link might involve certain cost. However, we do not consider any specific upper limit on the backhaul link capacity. If the backhaul link is blocked due to heavy load or due to finite backhaul capacity, a content request which is not able to find a match in the caches of its covering base stations can either be dropped or kept waiting for service hoping that  the backhaul load will be reduced later. If the content request arrival statistics is approximately known to the network operator prior to cache installation at the base stations, the operator can simulate the cache update scheme and estimate the average download rate required for the backhaul under the scheme; this estimate can be used as a design guideline for choosing the backhaul capacity.  
Hence, for the rest of the paper, we assume sufficient backhaul capacity to deal with cache miss.

\vspace{-5mm}

\section{Cache Update via Basic Gibbs Sampling}\label{section:given-temperature}
In this section, we propose an iterative, randomized cache update scheme so that the time-average occupancy of each $B \in \mathcal{B}$ under the scheme follows certain distribution called Gibbs distribution. In Section~\ref{section:varying-inverse-temperature}, we explain how tuning a certain parameter of the Gibbs distribution helps us in solving problem~\eqref{eqn:asymptotic-target}.

Let us rewrite \eqref{eqn:expanded-expression-of-hit-rate} as $h(B)=\sum_{j=1}^N h_j(B)$ where 
\begin{eqnarray}
 h_j(B)= \sum_{i=1}^M  \lambda_i  \sum_{s \in 2^{\mathcal{N}} } 
\frac{ |\mathcal{C}(s)| B_{i,j} \mathbf{1}\{j \in s\}  }{ \max\{1, \sum_{k \in s} B_{i,k} \} }.
 \label{eqn:expanded-expression-of-per-node-hit-rate}
\end{eqnarray}
We call $ h_j(B)$ to be the  cache hit rate seen by BS~$j$ under configuration~$B$. This will be the true cache 
hit rate seen by BS~$j$ under configuration~$B$ if a new content request is served by one covering BS chosen uniformly from the set 
of covering BSs having that content. Note that, if more than one covering BSs have that content, choice of the serving BS will not affect 
the hit rate; hence, we can safely assume uniform choosing of the serving BS.

In order to solve $\sup_{B \in \mathcal{B}} \sum_{j=1}^N h_j(B)$, we propose to employ Gibbs sampling techniques 
(see \cite[Chapter~$7$]{breamud99gibbs-sampling}). Let us assume that each BS maintains a {\em virtual} cache capable of 
storing $K$ contents. The broad idea is that one can update the virtual cache contents in an iterative fashion 
using Gibbs sampling. Whenever a content is requested from a BS not having the content in its 
physical (real) cache, the BS will download it from the backhaul and, at 
the same time, will decide  to store it in the real cache depending on whether it is stored in its virtual cache or not.

We will update the virtual caches according to a stochastic iterative algorithm so that the steady state probability 
of configuration $B$ becomes: 
\begin{equation}\label{eqn:expression-for-gibbs-stationary-probability}
\pi_{\beta}(B):=\frac{e^{\beta h(B)}}{ \sum_{B^{'} \in \mathcal{B}} e^{\beta h(B^{'})} }:= \frac{e^{\beta h(B)}}{Z_{\beta}} ,
\end{equation}
where $\beta$ is called the ``inverse temperature'' (motivated by literature from statistical Physics), and 
$Z_{\beta}$ is called the {\em partition function}.

Note that, $\lim_{\beta \rightarrow \infty} \sum_{B \in \arg \max_{B^{'} \in \mathcal{B}} h(B^{'})} \frac{e^{\beta h(B)}}{ \sum_{B^{'} \in \mathcal{B}} e^{\beta h(B^{'})} }=1$. 
Hence, if we choose configuration $B$ for all virtual caches with probability 
$\pi_{\beta}(B)$, then, for sufficiently large $\beta$, the chosen configuration will belong to  
$\arg \max_{B \in \mathcal{B}} h(B)$ with probability close to $1$. If real cache configuration closely follows virtual cache configuration, 
we can achieve near-optimal cache hit rate for real caching system.

\vspace{-5mm}

\subsection{Gibbs sampling approach for ``virtual'' cache update}
\label{subsection:modified-Gibbs-given-temperature}
Let us consider discrete time instants $t=0, 1,2 , \cdots$ when virtual cache contents are updated; this is different from the continuous 
time $\tau$ used before. Let us denote the 
configuration in all virtual caches in the network after the $t$-th decision instant by $V(t)$, where $V(t) \in \mathcal{B}$.   
The Gibbs sampling algorithm simulates a discrete-time Markov chain $V(t)$ on state space 
$\mathcal{B}$, whose stationary probability 
distribution is given by 
$\pi_{\beta}(B)= \frac{e^{\beta h(B)}}{Z_{\beta}}$.

Let us define the set of {\em neighbours} of BS~$j$ ({\bf including BS~$j$}) as 
$\Psi(j):=\{n: n \in \mathcal{N}, \mathcal{C}_j \cap \mathcal{C}_n \neq \emptyset \}$. Let us denote by $B_{\cdot,-j}$ the restriction 
of configuration $B$ to all BSs except BS~$j$, i.e., $B_{\cdot,-j}$ is obtained by deleting the $j$-th column of $B$. Let 
$\pi_{\beta}(\cdot | B_{\cdot,-j})$ denote the conditional distribution of the network-wide configuration conditioned on $B_{\cdot,-j}$, 
under the joint distribution $\pi_{\beta}(\cdot)$. Clearly, $\pi_{\beta}(A | B_{\cdot,-j})=0$ if $A_{\cdot,-j} \neq B_{\cdot,-j}$. 

If $A_{\cdot,-j} = B_{\cdot,-j}$, then 
\begin{equation}\label{eqn:first-expression-for-conditional-probability}
\pi_{\beta}(A | B_{\cdot,-j})=\frac{e^{\beta h( A )}}{\sum_{v_j \in \{0,1\}^M, ||v_j||_1=K} e^{\beta h( v_j,B_{\cdot,-j} )}},
\end{equation}
where $||v_j||_1$ is the sum of all components of the vector $v_j$.

Note that, there is common factor $ e^{\beta \sum_{n \notin \Psi(j)} h_n (B) }$ in both numerator and denominator 
of the expression in \eqref{eqn:first-expression-for-conditional-probability}, since this term does 
not depend on the contents in the virtual cache in BS~$j$. Hence, \eqref{eqn:first-expression-for-conditional-probability} 
can be further simplified as:
\begin{equation}\label{eqn:second-expression-for-conditional-probability}
\pi_{\beta}(A | B_{\cdot,-j})=\frac{e^{\beta \sum_{n \in \Psi(j)}h_n(A)}}{\sum_{v_j \in \{0,1\}^M, ||v_j||_1=K} e^{\beta \sum_{n \in \Psi(j)}h_n(v_j,B_{\cdot,-j} ) }}.
\end{equation}

Now, let us define $h_n(A,s)$ to be the hit rate seen by BS~$n$ 
under configuration $A$ due to the content requests generated  from the region $\mathcal{C}(s)$. Clearly, 
$h_n(A)=\sum_{s \in 2^{\mathcal{N}}} h_n(A,s)$, since the hit rate at BS~$n$ under configuration $A$ is equal to the sum of hit rates 
by requests generated from all possible segments $\{\mathcal{C}(s)\}_{s \in 2^{\mathcal{N}}}$. Now, note that, the term 
$e^{\beta \sum_{n \in \Psi(j)} \sum_{s: j \notin s} h_n(A,s)}$ is a common factor in the numerator and denominator of the expression 
in \eqref{eqn:second-expression-for-conditional-probability}, since 
this factor does not depend on the contents in the virtual cache of BS~$j$. Hence, when 
$A_{\cdot,-j} = B_{\cdot,-j}$, we can simplify \eqref{eqn:second-expression-for-conditional-probability} further as follows:

\small
\begin{equation}\label{eqn:third-expression-for-conditional-probability}
\pi_{\beta}(A | B_{\cdot,-j})=\frac{e^{\beta \sum_{n \in \Psi(j), s \ni j }h_n(A,s)}}{\sum_{v_j \in \{0,1\}^M, ||v_j||_1=K} e^{\beta \sum_{n \in \Psi(j), s \ni j}h_n(v_j,B_{\cdot,-j}, s ) }},
\end{equation}
\normalsize

\noindent
where 
\begin{equation}\label{eqn:hnA_definition}
h_n( A,s )=  \sum_{i=1}^M  \lambda_i  
 \frac{ |\mathcal{C}(s)| A_{i,n} \mathbf{1}\{n \in s\}  }{ \max\{1, \sum_{k \in s} A_{i,k} \} }. 
\end{equation}

We now describe an algorithm for sequentially updating the network-wide virtual cache configuration $V(t)$.

\vspace{2mm}
\noindent\fbox{
    \parbox{\textwidth}{
\begin{algorithm}\label{algorithm:virtual-cache-update-basic-gibbs-sampling}
 Start with an arbitrary $V(0) \in \mathcal{B}$. 
 At discrete time $t$, pick a node $j_t \in \mathcal{N}$ randomly having uniform distribution from $\mathcal{N}$. 
 Then, update the contents in the virtual cache of BS~$j_t$ by picking up a network-wide virtual cache configuration $A \in \mathcal{B}$ 
 with probability $\pi_{\beta}(A | V_{\cdot,-j_t}(t-1))$. Only contents in the virtual cache of BS~$j_t$ are modified  
 by this operation.
\end{algorithm}
}}
\vspace{2mm}

\begin{proposition}
 Under Algorithm~\ref{algorithm:virtual-cache-update-basic-gibbs-sampling}, 
 $\{V(t)\}_{t \geq 0}$ is a reversible  Markov chain, and it achieves the steady-state 
probability distribution $\pi_{\beta}(B)= \frac{e^{\beta h(B)}}{Z_{\beta}}$. 
\end{proposition}
\begin{proof}
The proof is standard, and it follows from the theory in \cite[Chapter~$7$]{breamud99gibbs-sampling}).
\end{proof}

\begin{remark}
 In Algorithm~\ref{algorithm:virtual-cache-update-basic-gibbs-sampling}, in 
 order to make an update at time $t$, BS~$j_t$ needs to know the contents of the virtual caches only 
 from  $\Psi(j_t)$. This requires information exchange between BS~$j_t$ and its neighbours in each slot. Such information exchange 
 may happen through the backhaul network, but this does not exert much load on the backhaul since the actual contents 
 are not exchanged via the backhaul in this process.
 \end{remark}

 \begin{remark}\label{remark:complexity-issue-in-basic-Gibbs}
   The denominator in the simplified sampling probability expression in \eqref{eqn:third-expression-for-conditional-probability}  
 requires a summation over all possible virtual cache configurations in $\Psi(j_t)$. This allows the system 
 to avoid the huge combinatorial problem of calculating $Z_{\beta}$ which requires 
 ${{M}\choose{K}}^N$ addition operations. The advantage will be even more visible if we consider 
 the possibility of varying $\beta$ with time or learning $\{\lambda_i\}_{1 \leq i \leq M}$ over time if they are not 
 known; the optimization problem $\sup_{B \in \mathcal{B}} h(B)$ will change over time in this case, and it will require 
 calculation of the partition function in each slot. However, for large $M$ and $K$, the $O({{M} \choose {K}})$ computations per iteration  in \eqref{eqn:third-expression-for-conditional-probability} can still be large; in this case, at each $t$, one can randomly remove one content from the virtual cache of $j_t$ and then replace it by one content (from $(M-K+1)$ contents not present in the virtual cache of $j_t$) using Gibbs sampling; this will involve a summation in the denominator of \eqref{eqn:third-expression-for-conditional-probability} over all $(M-K+1)$ possible configurations that can possibly result from this replacement, and  it will require only $O(M-K+1)$ computations. One can easily show that $V(t)$ will be a reversible Markov chain with stationary distribution $\pi_{\beta}(\cdot)$ under this variant. However, for the sake of notational simplicity, we do not consider this variant in the theory part of the paper.
 \end{remark}

\vspace{-5mm}

\subsection{The real cache update scheme for fixed $\beta$}
\label{subsection:real-cache-update}
Now we propose a cache update scheme for the real caches present in the BSs. {\em Our scheme decides to store 
a content in the cache of a BS only when the content is requested from that BS}. This eliminates the necessity of any 
unnecessary download from the backhaul.

Let us consider content request arrivals at continuous time (denoted by $\tau$ again) to the BS. Let us recall that the 
virtual caches are updated only at discrete times $t=0,1,2,\cdots$. We assume that these discrete time instants 
$t=0,1,2,\cdots$ units are superimposed on the continuous time axis $\tau \geq 0$. Hence, $V(\tau)$ is defined to be equal to 
$V(t)$ for $\tau \in [t,t+1)$, where $t \in \mathbb{Z}_{+}$. 

Let us consider an increasing sequence of positive real numbers (viewed as time durations) 
 $T_1,T_2, T_3,\cdots$ such as $T_k \uparrow \infty$ as $k \rightarrow \infty$. Let $S_l:=T_1+T_2+\cdots+T_l$. 
 Let $\kappa(\tau):=\sup \{l \in \mathbb{Z}_{+}: S_l \leq \tau \} $ and 
 $\zeta(\tau):=S_{\kappa(\tau)}$. 

The real cache update scheme is given as follows:

\vspace{2mm}
\noindent\fbox{
    \parbox{\textwidth}{
\begin{algorithm}\label{algorithm:real-cache-update-algorithm}
 Start with some arbitrary $R(0) \in \mathcal{B}$. 
 
 At time $\tau$, if the 
 request for content~$i$ arrives at BS~$j$ (either because no other covering BS has this content or because 
 BS~$j$ has been chosen  from among the covering BSs having content~$i$), then BS~$j$ does the following:  
 \begin{itemize}
 \item If BS~$j$ has content~$i$, it will serve that. 
 
 \item If BS~$j$ does not have content~$i$, it serves the content by downloading from the backhaul. Then content~$i$  
 is stored in the real cache of BS~$j$ if and only if $V_{i,j}(\zeta(\tau)-)=1$ (i.e., if content~$i$ was stored in the virtual cache 
 of BS~$j$ at time $\zeta(\tau)-$). If the BS~$j$  
 decides to store content~$i$ then, 
 in order to make room for the newly stored content~$i$, any content~$k \neq i$ 
 such that $V_{k,j}(\zeta(\tau)-)=0$ and $R_{k,j}(\tau)=1$, is removed 
 from the real cache of BS~$j$.   
 
 \end{itemize}
\end{algorithm}
}}
\vspace{2mm}

\begin{remark}
 The idea behind taking $T_k \rightarrow \infty$ as $k \rightarrow \infty$ in Algorithm~\ref{algorithm:real-cache-update-algorithm} 
 is as follows. We know that $V(t)$ reaches the distribution 
 $\pi_{\beta}(\cdot)$ as $t \rightarrow \infty$. As $k \rightarrow \infty$, the fraction of time spent during $\tau \in [S_k, S_{k+1})$
 in copying the contents present in $V(S_k-)$ to real 
 caches becomes negligible, and the real caches are allowed to operate larger and larger fraction of time under  
 content distribution close to $\pi_{\beta}(\cdot)$.
\end{remark}

Now we make the following assumption:

\begin{assumption}\label{assumption:each-cell-has-a-region-covered-only-by-itself}
$| \mathcal{C}_i  \cap ( \cup_{j \neq i} \mathcal{C}_j )^{c}|>0$ for all $i \in \{1,2,\cdots,N\}$. 
\end{assumption}

\begin{theorem}\label{theorem:asymptotic-optimality-of-real-cache-update}
 Under Assumption~\ref{assumption:each-cell-has-a-region-covered-only-by-itself}, 
 Algorithm~\ref{algorithm:virtual-cache-update-basic-gibbs-sampling} and 
 Algorithm~\ref{algorithm:real-cache-update-algorithm}, we have (for the real caches in all BSs):
 $$\lim_{T \rightarrow \infty}\frac{  \int_0^T \mathbf{P}(R(\tau)=B) d \tau  }{T}=
  \frac{e^{\beta h(B)}}{ \sum_{B^{'} \in \mathcal{B}} e^{\beta h(B^{'})} }.$$
\end{theorem}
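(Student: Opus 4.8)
The plan is to compare the real-cache process $R(\tau)$ with the frozen virtual snapshot it is chasing, and to show that (i) within each interval $[S_l,S_{l+1})$ the real cache locks onto the snapshot $V(S_l-)$ after a transient whose expected length is bounded uniformly in $l$, and (ii) the snapshots themselves converge in distribution to $\pi_\beta$. Combining the two and averaging over $\tau$ then yields the claim. Throughout I would use the decomposition, valid for every $B\in\mathcal{B}$,
$$\left|\mathbf{P}(R(\tau)=B)-\mathbf{P}(V(\zeta(\tau)-)=B)\right|\le \mathbf{P}(E_\tau^c),\qquad E_\tau:=\{R(\tau)=V(\zeta(\tau)-)\},$$
so that it suffices to bound the time-average of $\mathbf{P}(E_\tau^c)$ and to evaluate the time-average of $\mathbf{P}(V(\zeta(\tau)-)=B)$.

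First I would establish the per-epoch invariant. Fix the epoch $[S_l,S_{l+1})$; throughout it the target $V(\zeta(\tau)-)=V(S_l-)$ is constant. For a fixed BS~$j$, let $c_j(\tau)$ be the number of contents present in both the real-cache column $R_{\cdot,j}(\tau)$ and the frozen target column. Inspecting Algorithm~\ref{algorithm:real-cache-update-algorithm}, $c_j(\tau)$ never decreases: a content is evicted only when a requested content with target value $1$ is inserted, and such an insertion raises $c_j$ by exactly one, since whenever $c_j<K$ there is always an incorrectly cached (target-$0$) content available for eviction. Moreover, once $c_j=K$, i.e.\ $R_{\cdot,j}$ equals the target column, no further change occurs, because every subsequently requested missing content has target value $0$ and is not stored. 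Hence ``$R$ matches the target'' is absorbing within the epoch, and the full-configuration match $R(\tau)=V(S_l-)$ is reached once every column has matched.

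Next I would bound the catch-up time, and this is where Assumption~\ref{assumption:each-cell-has-a-region-covered-only-by-itself} is essential. Since $|\mathcal{C}_j\cap(\cup_{n\neq j}\mathcal{C}_n)^c|>0$, every request for a content $m$ originating in the exclusive region of BS~$j$ must be served by BS~$j$, so content $m$ is requested at BS~$j$ at rate at least $\lambda_m\,|\mathcal{C}_j\cap(\cup_{n\neq j}\mathcal{C}_n)^c|>0$. Each of the at most $K$ missing-correct contents is therefore requested at BS~$j$ after an almost surely finite waiting time, and the per-BS catch-up time is dominated by a coupon-collector time with finite expectation that is independent of $l$ and of the starting configuration and target. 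Writing $D_l$ for the full-network catch-up time in epoch $l$, one obtains $\mathbf{E}[D_l]\le\bar\mu<\infty$ uniformly in $l$. The time spent in $E_\tau^c$ within epoch $l$ is at most $D_l$, so $\int_0^T\mathbf{P}(E_\tau^c)\,d\tau\le(\kappa(T)+1)\bar\mu$. Because $T_l\uparrow\infty$, the Cesàro mean $S_l/l=\tfrac1l\sum_{i\le l}T_i\to\infty$, whence $\kappa(T)/T\to0$ and the time-average of $\mathbf{P}(E_\tau^c)$ tends to $0$.

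Finally I would assemble the averaging argument. The Markov chain $V(t)$ of Algorithm~\ref{algorithm:virtual-cache-update-basic-gibbs-sampling} is irreducible and aperiodic on $\mathcal{B}$ (the single-node Gibbs kernel of \eqref{eqn:third-expression-for-conditional-probability} has full support and admits self-loops), so by the earlier theorem $\mathbf{P}(V(t)=B)\to\pi_\beta(B)$. Since $\zeta(\tau)\to\infty$, we get $\mathbf{P}(V(\zeta(\tau)-)=B)\to\pi_\beta(B)$, and because $\tfrac1T\int_0^T\mathbf{P}(V(\zeta(\tau)-)=B)\,d\tau$ is a positively weighted average of the convergent sequence $\mathbf{P}(V(S_l-)=B)$ (weights $T_{l+1}$), a Toeplitz/Cesàro argument gives $\tfrac1T\int_0^T\mathbf{P}(V(\zeta(\tau)-)=B)\,d\tau\to\pi_\beta(B)$. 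Combining with the vanishing transient term yields the stated limit. The main obstacle is precisely the uniform control of $D_l$: Assumption~\ref{assumption:each-cell-has-a-region-covered-only-by-itself} is what guarantees that every content is requestable at every BS at a strictly positive rate, which simultaneously makes the monotone catch-up argument run and bounds $\mathbf{E}[D_l]$ independently of the (random, $\pi_\beta$-distributed) target being chased.
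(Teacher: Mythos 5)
Your proof is correct and follows essentially the same route as the paper's: both freeze the virtual snapshot $V(S_l-)$ over each epoch $[S_l,S_{l+1})$, use Assumption~\ref{assumption:each-cell-has-a-region-covered-only-by-itself} to obtain a finite-expectation catch-up time after which the real cache locks onto that snapshot, and exploit $T_l\uparrow\infty$ to make the transient fraction of time vanish before invoking convergence of $V(t)$ to $\pi_\beta$. The differences are only bookkeeping --- you control the mismatch via $\mathbf{E}[D_l]\le\bar\mu$ together with $\kappa(T)/T\to 0$ and a Toeplitz average, whereas the paper uses the per-epoch tail bound $\mathbf{P}(T_B>\epsilon T_{l+1})<\epsilon$ and a $\liminf$/$\limsup$ sandwich with Fatou's lemma --- and your explicit monotone lock-on argument for $c_j(\tau)$ makes rigorous a step the paper's proof uses implicitly.
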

\begin{proof}
See Appendix~\ref{appendix:real-cache-update-basic-gibbs-proofs}.
 \end{proof}

\begin{remark}
 Note that, Assumption~\ref{assumption:each-cell-has-a-region-covered-only-by-itself} is very crucial in the proof of 
 Theorem~\ref{theorem:asymptotic-optimality-of-real-cache-update}, because this assumption ensures that every BS 
 gets content requests at some nonzero arrival rate, and hence can update its real cache at strictly positive rate. If 
 Assumption~\ref{assumption:each-cell-has-a-region-covered-only-by-itself} is not satisfied, then one can still achieve  near optimal hit rate 
 in real caches. It is achieved under a scheme where  a new content request is sent to any of  its covering BSs with very small probability $\eta>0$, 
 and 
 otherwise the request is sent to a covering BS having that content. Similar analysis as in this paper can show that the time-average 
 expected hit rate under this scheme differs from the optimal hit rate $\max_{B \in \mathcal{B}} h(B)$ only by a small margin which goes to $0$ as 
 $\eta \downarrow 0$.
\end{remark}

\begin{remark}
Note that, Algorithm~\ref{algorithm:real-cache-update-algorithm} will work for any sequence 
$\{T_k\}_{k \geq 1}$ so long as the sequence increases to infinity. However, the speed of convergence will depend on the specific choice of the sequence, and also on system parameters such as content popularities, arrival rates and cellular network topology. An analytical characterization of the speed of convergence as a function of $\{T_k\}_{k \geq 1}$ is hard, so we leave it for future research endeavours on this topic.
\end{remark}

{\bf Incorporating unequal content sizes in our model:} If $c_i$ is the size of content~$i$ in bytes and $K$ is the memory of a cache in bytes, then any feasible configuration $B$ for unequal content sizes must satisfy the condition $\sum_{i=1}^M B_{i,j} c_i \leq K$ for all $j \in \{1,2,\cdots,N\}$ (instead of $\sum_{i=1}^M B_{i,j}  \leq K$ for all $j \in \{1,2,\cdots,N\}$ as required for equal content sizes with $K$ being the maximum possible number of contents per cache); the collection of such feasible $B$ matrices is called $\mathcal{B}$. Clearly, the set of feasible configurations $\mathcal{B}$  is redefined   for unequal content sizes. However, given this new $\mathcal{B}$, Algorithm~\ref{algorithm:real-cache-update-algorithm} will still work since the virtual and real cache update schemes depend on the set $\mathcal{B}$ (which is a collection of $0-1$  matrices) and not on the actual content sizes. Convergence of all algorithms proposed later will also hold in case content sizes are unequal, though the convergence rates will vary depending on the exact $\mathcal{B}$. Note that,   for unequal content sizes, the best choice  of $h(B)$ is the  mean cache hit rate in bytes per second, i.e., $h(B):=\sum_{s \in 2^{\mathcal{N}}} 
 |\mathcal{C}(s)| \sum_{i=1}^M \lambda_i c_i  \mathbf{1}\{\sum_{j \in s} B_{i,j} \geq 1\} $. This new objective function is separable across base stations and hence the virtual cache update rules for fixed $\beta$ will have similar form as \eqref{eqn:second-expression-for-conditional-probability} and \eqref{eqn:third-expression-for-conditional-probability}; as a result, this modified $h(B)$ will not alter the structures of the algorithms at all. 
 {\em  For the rest of the paper, we will use \eqref{eqn:expanded-expression-of-hit-rate} as a definition of $h(B)$ for the sake of simplicity.}


\vspace{-5mm}

\section{Varying $\beta$ to Reach Optimality}\label{section:varying-inverse-temperature}
In this section, we discuss how to vary the inverse temperature $\beta$ to infinity with time so that the 
Gibbs sampling algorithm (used to 
update virtual caches) 
converges to the optimizer of \eqref{eqn:objective-function}. Here the intuition is that, Gibbs sampling with increasing $\beta$, combined with 
Algorithm~\ref{algorithm:real-cache-update-algorithm}  for real cache update, will achieve 
optimal time-average expected cache hit rate for problem \eqref{eqn:asymptotic-target}.

Let us define $$\Delta:=\max_{B \in \mathcal{B}}h(B)-\min_{B^{'} \in \mathcal{B}}h(B^{'})>0.$$

\vspace{2mm}
\noindent\fbox{
    \parbox{\textwidth}{
\begin{algorithm}\label{algorithm:virtual-cache-update-varying-inverse-temperature}
 This algorithm is analogous to Algorithm~\ref{algorithm:virtual-cache-update-basic-gibbs-sampling} except that, 
  at discrete time 
 instant $t \geq 0$, we use $\beta_t:=\beta_0 \log (1+t)$ instead of fixed $\beta$, where 
 $0< \beta_0 < \infty$ is the initial inverse temperature satisfying $\beta_0 N \Delta<1$ and 
 $\beta_0 \max_{B \in \mathcal{B}}h(B)<1$.
\end{algorithm}
}}
\vspace{2mm}

\begin{theorem}\label{theorem:strong-ergodicity-varying-inverse-temperature}
 Under Algorithm~\ref{algorithm:virtual-cache-update-varying-inverse-temperature} for virtual cache update, 
 the discrete time non-homogeneous Markov chain $\{V(t)\}_{t \geq 0}$ 
 is strongly ergodic, and the limiting distribution $\pi_{v,\infty}$ satisfies:
  $$\pi_{v,\infty}(\arg \max_{B \in \mathcal{B}} h(B))=1.$$ 
\end{theorem}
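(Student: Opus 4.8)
The plan is to treat Algorithm~\ref{algorithm:virtual-cache-update-varying-inverse-temperature} as a simulated-annealing recursion and invoke the standard perturbation criterion for strong ergodicity of inhomogeneous Markov chains (see \cite[Chapter~$7$]{breamud99gibbs-sampling}). First I would group the discrete time steps into blocks $l=0,1,2,\ldots$, where block $l$ consists of the $N$ instants $t=lN,\ldots,lN+N-1$ during which the inverse temperature is frozen at $\beta_{lN}=\beta_0\log(1+l)$. Let $P^{(l)}$ denote the $N$-step transition matrix of $\{V(t)\}$ over block $l$. Because $\beta$ is constant throughout the block, Theorem~$1$ shows that $P^{(l)}$ is an irreducible single-site Gibbs kernel on $\mathcal{B}$ whose unique invariant distribution is $\pi_{\beta_{lN}}(\cdot)=e^{\beta_{lN}h(\cdot)}/Z_{\beta_{lN}}$. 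The criterion I will use states that if the chain is weakly ergodic and $\sum_{l}\sum_{B\in\mathcal{B}}|\pi_{\beta_{lN}}(B)-\pi_{\beta_{(l+1)N}}(B)|<\infty$, then it is strongly ergodic with limiting law $\lim_{l\to\infty}\pi_{\beta_{lN}}$. So the proof reduces to (i) weak ergodicity, (ii) summability of the invariant-distribution increments, and (iii) identification of the limit.

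For (i) I would bound Dobrushin's contraction coefficient $\delta(P^{(l)})$ from above. A single-site update at inverse temperature $\beta$ selects the target column with the conditional probability \eqref{eqn:third-expression-for-conditional-probability}; since the local hit rates appearing there lie in a bounded range no larger than $\Delta$, every such conditional probability is at least $\binom{M}{K}^{-1}e^{-\beta\Delta}$, and the uniform node choice contributes an extra factor $1/N$ per step. Composing $N$ such steps over block $l$ yields a uniform lower bound on the probability of driving any configuration to any target, whence
\[
1-\delta\!\left(P^{(l)}\right)\;\geq\; c\,(1+l)^{-\beta_0 N\Delta}
\]
for a constant $c>0$ depending only on $M,K,N$. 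The hypothesis $\beta_0 N\Delta<1$ makes the exponent strictly less than one, so $\sum_{l}(1+l)^{-\beta_0 N\Delta}=\infty$, equivalently $\prod_{l}\delta(P^{(l)})=0$, which is exactly the condition guaranteeing weak ergodicity. This step is the crux of the argument and the place where the slow logarithmic schedule and the constraint on $\beta_0$ are indispensable.

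For (ii) I would exploit the smooth dependence of $\pi_\beta$ on $\beta$. For each fixed $B$, the map $\beta\mapsto\pi_\beta(B)$ is differentiable, and once $\beta$ is large enough it is monotone in $\beta$ (nondecreasing on $\arg\max_{B}h(B)$, nonincreasing otherwise), because mass steadily migrates onto the maximizers. Monotonicity lets the total-variation increments telescope, so $\sum_{l}\sum_{B}|\pi_{\beta_{lN}}(B)-\pi_{\beta_{(l+1)N}}(B)|$ is dominated by a bounded telescoping sum and is therefore finite; the condition $\beta_0\max_{B}h(B)<1$ is used here to control these increments uniformly from the first block onward. Combining (i) and (ii) with the perturbation criterion yields strong ergodicity and the existence of $\pi_{v,\infty}=\lim_{l\to\infty}\pi_{\beta_{lN}}$.

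Finally, for (iii), since $\beta_{lN}=\beta_0\log(1+l)\to\infty$, the limit $\pi_{v,\infty}$ coincides with $\lim_{\beta\to\infty}\pi_\beta$, which (as already noted just after \eqref{eqn:expression-for-gibbs-stationary-probability}) is the uniform distribution supported on $\arg\max_{B\in\mathcal{B}}h(B)$. Hence $\pi_{v,\infty}(\arg\max_{B\in\mathcal{B}}h(B))=1$, completing the proof. The main difficulty is the block-level Dobrushin estimate in step (i): one must verify that a single frozen-$\beta$ sweep already contracts total variation at a rate whose reciprocal is summable, and this is precisely what forces the logarithmic schedule together with $\beta_0 N\Delta<1$.
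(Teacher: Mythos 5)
Your proposal is correct and follows essentially the same route as the paper's proof: the same block decomposition into sweeps of length $N$, the same Dobrushin-coefficient lower bound $\bigl(e^{-\beta_l\Delta}/(N\binom{M}{K})\bigr)^N$ yielding weak ergodicity from $\beta_0 N\Delta<1$, and the same monotonicity/telescoping argument for the summability of the invariant-distribution increments followed by identification of the limit as the distribution concentrated on $\arg\max_{B}h(B)$. The only cosmetic difference is that you attribute a role to the condition $\beta_0\max_B h(B)<1$ in step (ii), which the paper's proof does not actually invoke.
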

\begin{proof}
See Appendix~\ref{appendix:proof-of-strong-ergodicity-for-varying-inverse-temperature}  for the proof. The definition of strong ergodicity can be found in Appendix~\ref{appendix:weak-and-strong-ergodicity}. We have used some results from \cite[Chapter~$6$]{breamud99gibbs-sampling} in the proof. \footnote{In this connection, we would like to mention that  \cite{hajek1988cooling} also provided similar results as \cite[Chapter~$6$]{breamud99gibbs-sampling} on simulated annealing with a cooling schedule.}.
 \end{proof}

\begin{theorem}\label{theorem:asymptotic-optimal-real-cache-update-varying-inverse-temperature}
 Under Assumption~\ref{assumption:each-cell-has-a-region-covered-only-by-itself}, 
 Algorithm~\ref{algorithm:virtual-cache-update-varying-inverse-temperature} for virtual cache update and 
 Algorithm~\ref{algorithm:real-cache-update-algorithm} for real cache update, we have:
 $$\lim_{T \rightarrow \infty}\frac{  \int_0^T \mathbf{P}(R(\tau)=\arg \max_{B \in \mathcal{B}} h(B)) d \tau  }{T}=1,$$ 
and hence,
  $$\lim_{T \rightarrow \infty}\frac{  \int_0^T \mathbf{E}(h(R(\tau))) d \tau  }{T}= \max_{B \in \mathcal{B}} h(B).$$ 
\end{theorem}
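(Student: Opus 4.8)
The plan is to combine Theorem~\ref{theorem:strong-ergodicity-varying-inverse-temperature}, which already guarantees that the virtual-cache chain $\{V(t)\}$ concentrates asymptotically on the optimal set $\mathcal{B}^{\star}:=\arg\max_{B\in\mathcal{B}}h(B)$, with the block-copying argument of Theorem~\ref{theorem:asymptotic-optimality-of-real-cache-update}. Writing $p(\tau):=\mathbf{P}(R(\tau)\in\mathcal{B}^{\star})$, it suffices to establish $\lim_{T\to\infty}\frac{1}{T}\int_0^T p(\tau)\,d\tau=1$. The second display then follows at once, because pointwise $\mathbf{E}(h(R(\tau)))$ is sandwiched between $(\max_{B}h(B))\,p(\tau)+(\min_{B}h(B))(1-p(\tau))$ and $\max_{B}h(B)$, and the boundedness of $h$ over the finite set $\mathcal{B}$ lets the time-average pass to the limit $\max_{B}h(B)$.

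First I would fix $\epsilon>0$ and invoke strong ergodicity: since $\pi_{v,\infty}$ puts all its mass on $\mathcal{B}^{\star}$, there is an integer $t_0$ with $\mathbf{P}(V(t)\in\mathcal{B}^{\star})\geq 1-\epsilon$ for every integer $t\geq t_0$. Next, exactly as in Theorem~\ref{theorem:asymptotic-optimality-of-real-cache-update}, for each $B\in\mathcal{B}$ let $T_B$ denote the random time the arrival process needs, during a block whose virtual target is $B$, to deliver at least one request of content~$i$ to BS~$j$ for every pair with $B_{i,j}=1$. Assumption~\ref{assumption:each-cell-has-a-region-covered-only-by-itself} ensures each BS has an exclusively covered region of positive area, so such requests reach every BS at a strictly positive rate and $\mathbf{E}(\max_{B\in\mathcal{B}}T_B)<\infty$ (a finite maximum of integrable variables). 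Because $T_{l+1}\uparrow\infty$, I can then pick $l$ large enough that simultaneously $S_l\geq t_0$ and $\mathbf{P}(\max_B T_B>\epsilon T_{l+1})<\epsilon$.

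The crucial observation is that on the block $\tau\in[S_l,S_{l+1})$ one has $\zeta(\tau)=S_l$, so Algorithm~\ref{algorithm:real-cache-update-algorithm} copies the single virtual configuration $V(S_l-)$ into the real caches; once the fill-in time has elapsed the real configuration coincides with $V(S_l-)$, hence lies in $\mathcal{B}^{\star}$ precisely when $V(S_l-)$ does. Mimicking the lower-bound chain of inequalities in the earlier proof then yields
\begin{eqnarray*}
\frac{\int_{S_l}^{S_{l+1}}p(\tau)\,d\tau}{T_{l+1}}
&\geq& \frac{(1-\epsilon)\,(T_{l+1}-\epsilon T_{l+1})\,\mathbf{P}(V(S_l-)\in\mathcal{B}^{\star})}{T_{l+1}}\\
&\geq& (1-\epsilon)^{3}.
\end{eqnarray*}
Controlling $\liminf_{T}\frac{1}{T}\int_0^T p(\tau)\,d\tau$ by these block averages as before and letting $\epsilon\downarrow0$ gives $\liminf_T\frac{1}{T}\int_0^T p(\tau)\,d\tau\geq1$; since $p(\tau)\leq1$ forces $\limsup_T\frac{1}{T}\int_0^T p(\tau)\,d\tau\leq1$, the limit equals $1$, and the hit-rate conclusion follows.

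The main obstacle I anticipate is not any single estimate but the careful coordination of the two asymptotics: the discrete-time convergence of $V(t)$ toward $\mathcal{B}^{\star}$ governed by Theorem~\ref{theorem:strong-ergodicity-varying-inverse-temperature} must be aligned with the continuous-time block schedule $\{S_l\}$, and one must verify that the fill-in stage occupies a vanishing fraction $\epsilon T_{l+1}$ of each block with high probability. This is precisely where Assumption~\ref{assumption:each-cell-has-a-region-covered-only-by-itself} is indispensable, for without a strictly positive per-BS request rate $\mathbf{E}(T_B)$ could be infinite and the copying of $V(S_l-)$ into the real caches might never complete within a block.
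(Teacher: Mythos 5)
Your proposal is correct and follows essentially the same route as the paper: the paper's own proof simply states that the first claim follows by repeating the block-copying argument of Theorem~\ref{theorem:asymptotic-optimality-of-real-cache-update} (now fed by the strong-ergodicity conclusion of Theorem~\ref{theorem:strong-ergodicity-varying-inverse-temperature}) and that the second claim follows from writing $\mathbf{E}(h(R(\tau)))=\sum_{B\in\mathcal{B}}\mathbf{P}(R(\tau)=B)h(B)$, which is exactly what you do. If anything, your write-up supplies more detail than the paper's two-line proof, e.g.\ the uniform fill-in time $\max_{B}T_B$ and the observation that the upper bound is trivial here because $p(\tau)\leq 1$.
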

\begin{proof}
 The first part of the proof follows using similar arguments as in the proof of Theorem~\ref{theorem:asymptotic-optimality-of-real-cache-update}. 
 The second part follows from the first part using the fact that $\mathbf{E}(h(R(\tau)))=\sum_{B \in \mathcal{B}} \mathbf{P}(R(\tau)=B) h(B)$.
\end{proof}

\begin{remark}
From \cite[Figure~$3$]{bartek14caching}, we notice that independent placement of contents across BSs can significantly 
outperform the placement of $K$ most popular contents in each BS cache (for a Poisson distributed network). However, our proposed scheme yields the optimal 
hit rate for every realization of the location of BSs, so long as the number of BSs is finite. Hence, we can safely claim that our 
proposed scheme significantly outperforms the placement of $K$ most popular contents in each BS cache.
\end{remark}

\vspace{-5mm}

\subsection{Convergence rate of the virtual cache update scheme}\label{subsection:convergence-speed} 
While we are not aware of any closed-form bound on the convergence rate 
for Algorithm~\ref{algorithm:virtual-cache-update-varying-inverse-temperature}, by using 
\cite[Chapter~$6$, Theorem~$7.2$]{breamud99gibbs-sampling}, we can provide convergence rate guarantee for 
Algorithm~\ref{algorithm:virtual-cache-update-basic-gibbs-sampling}. Let us consider 
the Markov 
chain $\{Y(l)\}_{l \geq 0}$, where $Y(l):=V(lN)$,  evolving under 
Algorithm~\ref{algorithm:virtual-cache-update-basic-gibbs-sampling}, and let us denote the corresponding transition probability matrix (t.p.m.)   by $Q$. Let us denote the Dobrushin ergodic coefficient of $Q$ by $\delta(Q)$ (see the proof of Theorem~\ref{theorem:strong-ergodicity-varying-inverse-temperature} in Appendix~\ref{appendix:proof-of-strong-ergodicity-for-varying-inverse-temperature}). 

Let us  define 
\begin{eqnarray*}
\Delta_1:=\max_{j \in \mathcal{N}, B \in \mathcal{B}} \max_{v_j,w_j \in \{0,1\}^M, |v_j|_1=|w_j|_1=K} | \sum_{n \in \Psi(j), s \ni j} h_n(v_j,B_{.,-j},s) -\sum_{n \in \Psi(j), s \ni j} h_n(w_j,B_{.,-j},s)|.
\end{eqnarray*} 
Note that, for any $j \in \mathcal{N}$, the quantity $h_n(v_j,B_{.,-j},s)$ for $n \in \Psi(j), s \ni j$ does not depend on the contents in the caches of base stations outside $\Psi(j)$. 

Now, let us recall Equation~\eqref{eqn:third-expression-for-conditional-probability}. In a way similar to the proof of Theorem~\ref{theorem:strong-ergodicity-varying-inverse-temperature} in Appendix~\ref{appendix:proof-of-strong-ergodicity-for-varying-inverse-temperature}, we can show $\delta(Q) \leq  1-\bigg(\frac{e^{-\beta \Delta_1}}{N}\bigg)^N $. Then, by \cite[Chapter~$6$, Theorem~$7.2$]{breamud99gibbs-sampling}, 
the total variation distance between $\mu_l$ (i.e., the probability distribution of $Y(l)$) and the steady state distribution  $\pi_{\beta}(\cdot)$ is upper bounded as:

\small
$$ d_V(\mu_l,\pi_{\beta}) \leq d_V(\mu_0,\pi_{\beta}) (\delta(Q))^l \leq d_V(\mu_0,\pi_{\beta}) \bigg( 1-\bigg(\frac{e^{-\beta \Delta_1}}{N}\bigg)^N \bigg)^l .$$
\normalsize

We can prove similar results for the Markov chain $\{V(lN+k)\}_{l \geq 0}$ for any $k \in \{0,1,\cdots,N-1\}$. 
Clearly, the R.H.S. of the above equation increases with $\beta$. Hence, under 
Algorithm~\ref{algorithm:virtual-cache-update-varying-inverse-temperature}, 
we can expect slower convergence rate as time increases. It has to be noted that there is a trade-off 
between convergence rate and the accuracy 
of the virtual cache update scheme using Gibbs sampling; higher accuracy (by taking very large $\beta$) 
obviously requires longer time because of slow convergence rate. It also suggests that the rate of convergence decreases 
with $N$ (provided that other parameters such as $\beta_0$ and $\Delta_1$ are fixed). Note that, $\Delta_1<\Delta$ and the difference between these two terms is large for large $N$. Hence, this provides a reasonably tight bound on the convergence rate for large $N$.

\vspace{-5mm}

\section{Learning Content Popularities and Cell Topology}\label{section:learning-popularities}
In previous sections, we assumed that the content request arrival rates per unit area, $\lambda_1,\lambda_2,\cdots,\lambda_M$, and the 
areas $|\mathcal{C}(s)|, s \in 2^{\mathcal{N}}$ are known 
 to all BSs. But, in practice, these quantities may not be known apriori, and one has to estimate these quantities 
 over time as new content requests arrive to the system. In this section, we will extend 
 Algorithm~\ref{algorithm:virtual-cache-update-varying-inverse-temperature} to adapt to learning of these quantities.
 
At time slot $t$, the BS $j_t$ (uniformly chosen from the set of BSs) chooses its virtual contents in such a way that 
 the probability of choosing network-wide configuration $A$ at time $t, lN \leq t \leq lN+N-1$ is 
$\pi_{\beta}(A | V_{\cdot,-j_t}(t-1))$.

 Let us recall the expression for $h_n(A,s)$ from \eqref{eqn:hnA_definition}. 
 Clearly, if one can estimate $\lambda_i |\mathcal{C}(s)|$ for all possible $(i,s) \in \mathcal{M} \times 2^{\mathcal{N}}$, 
 then one can have an estimate of $h_n(A,s)$. This can be done by estimating the request arrival rate for content~$i$ from 
 the region $\mathcal{C}(s)$; this is easy to do because this is a time-homogeneous Poisson process with rate 
 $\lambda_i |\mathcal{C}(s)|$ request per unit time.

Let us assume that each BS~$k$ has an estimate $\hat{\theta}(k,i,s,t)$ for  $\lambda_i | \mathcal{C}(s) |$ in slot $t$. 
This can be done through continuous 
message exchange among the BSs which observe the content request arrival process over time.

Now we present the virtual cache update algorithm.

\vspace{2mm}
\noindent\fbox{
    \parbox{\textwidth}{
\begin{algorithm}\label{algorithm:virtual-cache-update-learning}
 This algorithm is same as Algorithm~\ref{algorithm:virtual-cache-update-varying-inverse-temperature} except that 
the estimate $\hat{\theta}(k,i,s,t)$ is used at slot $t$ by BS~$k$, instead of the actual value of $\lambda_i | \mathcal{C}(s) |$.
\end{algorithm}
}}
\vspace{2mm}

\begin{assumption}\label{assumption:estimates-converge}
 $\lim_{t \rightarrow \infty} \hat{\theta}(k,i,s,t)=\lambda_i | \mathcal{C}(s) |$ almost surely 
for all $k \in \mathcal{N}, i \in \mathcal{M}, s \in 2^{\mathcal{N}}$.
\end{assumption}

Assumption~\ref{assumption:estimates-converge} ensures that each BS~$k$ has an estimate of the total request arrival rate for content~$i$ in the segment $\mathcal{C}(s)$ of the plane, and this estimate converges to the true value $\lambda_i | \mathcal{C}(s) |$ as time progresses. This can simply be achieved if the number of arrivals for various contents at each $\mathcal{C}(s)$ are recorded in the system, and are communicated periodically to all base stations in the network. As time progresses, more requests come to each segment $\mathcal{C}(s)$ and the estimates become better and closer to their respective mean values.

\begin{assumption}\label{assumption:uniqueness-of-maximizer}
 $\arg \max_{B \in \mathcal{B}}h(B)$ is unique.
\end{assumption}

\begin{theorem}\label{theorem:convergence-virtual-cache-update-learning}
 Under  Assumption~\ref{assumption:estimates-converge}, Assumption~\ref{assumption:uniqueness-of-maximizer} and 
 Algorithm~\ref{algorithm:virtual-cache-update-learning} for virtual cache update, 
 the discrete time non-homogeneous Markov chain $\{V(t)\}_{t \geq 0}$ 
 is strongly ergodic, and the limiting distribution $\pi_{v,\infty}(\cdot)$ satisfies $\pi_{v,\infty}(\arg \max_{B \in \mathcal{B}} h(B))=1$.
\end{theorem}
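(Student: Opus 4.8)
The plan is to follow the architecture of the proof of Theorem~\ref{theorem:strong-ergodicity-varying-inverse-temperature} and to establish strong ergodicity through the weak-ergodicity-plus-stationary-variation criterion of \cite[Chapter~$6$, Theorem~$8.3$]{breamud99gibbs-sampling}, with the understanding that every statement is now read on the almost-sure event on which Assumption~\ref{assumption:estimates-converge} holds. Write $\hat{h}_t(\cdot)$ for the hit-rate functional assembled from the estimates $\hat{\theta}(k,i,s,t)$ in place of the true $\lambda_i|\mathcal{C}(s)|$; by Assumption~\ref{assumption:estimates-converge}, $\hat{h}_t \to h$ pointwise on the finite set $\mathcal{B}$, almost surely. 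I fix such a sample path for the rest of the argument. Since $\mathcal{B}$ is finite and the estimates are eventually bounded along this path, there is a (random) constant $\hat{\Delta}$, which may be taken arbitrarily close to $\Delta$, with $\max_B \hat{h}_t(B) - \min_B \hat{h}_t(B) \le \hat{\Delta}$ for all large $t$.

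First I would reproduce the weak-ergodicity step of Theorem~\ref{theorem:strong-ergodicity-varying-inverse-temperature}. Once node $j_t$ is selected, the conditional sampling probability of any admissible content set is bounded below by $e^{-\beta_l \hat{\Delta}}/\binom{M}{K}$, so over a block of $N$ slots the block kernel satisfies $Q_l(\mathbf{B'},\mathbf{B}) \ge \bigl(e^{-\beta_l \hat{\Delta}}/(N\binom{M}{K})\bigr)^N$ uniformly in $\mathbf{B'},\mathbf{B}$. Because $\beta_l=\beta_0\log(1+l)$, the estimated range converges to $\Delta$, and $\beta_0 N\Delta<1$, I may choose $\hat\Delta$ with $\beta_0 N\hat\Delta<1$; the Dobrushin coefficients then obey $\sum_l(1-\delta(Q_l)) \ge c\sum_l (1+l)^{-\beta_0 N\hat\Delta}=\infty$, which gives weak ergodicity of $\{Y(l)\}$ and hence of $\{V(t)\}$.

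Next I would verify the two hypotheses of \cite[Chapter~$6$, Theorem~$8.3$]{breamud99gibbs-sampling}. Condition~$8.9$ is immediate: the frozen kernel $\hat{Q}^{(T)}$ (estimates and $\beta_T$ held fixed) is an irreducible aperiodic Gibbs kernel with unique stationary law $\hat{\pi}^{(T)}(B)=e^{\beta_T \hat{h}_T(B)}/\sum_{B'}e^{\beta_T \hat{h}_T(B')}$. For the limit I invoke Assumption~\ref{assumption:uniqueness-of-maximizer}: let $B^\star$ be the unique maximizer of $h$ and $\gamma:=\min_{B\ne B^\star}(h(B^\star)-h(B))>0$. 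On the fixed path there is a time $T_0$ after which $\max_B|\hat{h}_T(B)-h(B)|<\gamma/3$, so for $T\ge T_0$ the estimated maximizer coincides with $B^\star$ and $\hat{h}_T(B^\star)-\hat{h}_T(B)\ge \gamma/3$ for every $B\ne B^\star$; hence $\hat{\pi}^{(T)}(B^\star)\ge\bigl(1+(|\mathcal{B}|-1)e^{-\beta_T\gamma/3}\bigr)^{-1}\to 1$, which already identifies the limiting law $\pi_{v,\infty}$ as the point mass at $B^\star$ and therefore gives $\pi_{v,\infty}(\arg\max_{B\in\mathcal{B}}h(B))=1$.

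The hard part is Condition~$8.10$, i.e.\ the summability $\sum_T\|\hat{\pi}^{(T+1)}-\hat{\pi}^{(T)}\|_{TV}<\infty$. In Theorem~\ref{theorem:strong-ergodicity-varying-inverse-temperature} this was free, because with $h$ fixed each coordinate $\pi_{\beta_T}(B)$ is eventually monotone in $T$, so the finitely many per-coordinate sums telescope; here the fluctuations of the estimates can destroy that monotonicity. I would therefore split
\begin{equation*}
\|\hat{\pi}^{(T+1)}-\hat{\pi}^{(T)}\| \le \underbrace{\|\hat{\pi}^{(T+1)}-\pi_{\beta_{T+1}}\|}_{\text{estimate error at }T+1} + \underbrace{\|\pi_{\beta_{T+1}}-\pi_{\beta_T}\|}_{\text{pure annealing}} + \underbrace{\|\pi_{\beta_T}-\hat{\pi}^{(T)}\|}_{\text{estimate error at }T},
\end{equation*}
where $\pi_{\beta_T}$ denotes the \emph{true} Gibbs law at inverse temperature $\beta_T$. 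The middle term is summable by the telescoping argument of Theorem~\ref{theorem:strong-ergodicity-varying-inverse-temperature}. The crux is the two estimate-error terms: a first-order perturbation bound gives $\|\pi_{\beta_T}-\hat{\pi}^{(T)}\|\lesssim \beta_T\max_B|\hat{h}_T(B)-h(B)|$, in which the amplifying factor $\beta_T=\beta_0\log(1+l)\to\infty$ competes against the vanishing estimation error. Mere almost-sure convergence of the estimates does not by itself force this product to be summable, so this is exactly where I expect the real work to lie: one must either exploit additional regularity of the Poisson-based estimators (a quantitative, almost-surely summable control on $\beta_T\|\hat{h}_{T+1}-\hat{h}_T\|$ and on $\beta_T\|\hat{h}_T-h\|$), or slow the annealing schedule relative to the estimation error so that a two-timescale separation keeps the frozen stationary laws of bounded variation. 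Granting such control, Condition~$8.10$ holds, \cite[Chapter~$6$, Theorem~$8.3$]{breamud99gibbs-sampling} yields strong ergodicity, and the limit computed above completes the proof.
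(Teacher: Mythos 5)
Your route is genuinely different from the paper's. You argue through \cite[Chapter~$6$, Theorem~$8.3$]{breamud99gibbs-sampling} (weak ergodicity plus summable variation of the frozen stationary laws), whereas the paper's proof never touches Condition~$8.10$: it observes that the instantaneous kernels $Q^{(T)}$ built from the estimates converge to the kernels of Algorithm~\ref{algorithm:virtual-cache-update-varying-inverse-temperature} with the true parameters --- a chain already shown to be strongly ergodic with the desired limit in Theorem~\ref{theorem:strong-ergodicity-varying-inverse-temperature} --- and then invokes the comparison result \cite[Chapter~$6$, Theorem~$8.5$]{breamud99gibbs-sampling} to transfer strong ergodicity to the perturbed chain. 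Your weak-ergodicity step and your identification of the limit law via Assumption~\ref{assumption:uniqueness-of-maximizer} are correct and match the ingredients the paper uses elsewhere.

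However, as written your proposal is incomplete exactly where you say it is: Condition~$8.10$ is only ``granted.'' After your (correct) three-term split, the two estimate-error terms are controlled by $\beta_T\max_B|\hat{h}_T(B)-h(B)|$, and Assumption~\ref{assumption:estimates-converge} supplies almost-sure convergence with no rate; even at the natural $O(T^{-1/2})$ rate for the Poisson estimators the resulting series $\sum_T \beta_T\,T^{-1/2}$ diverges, so summability cannot be extracted from the stated hypotheses along your route. You should note, though, that the obstruction you isolated does not vanish in the paper's argument either: the kernel discrepancy $\lVert Q^{(T)}-Q^{*(T)}\rVert$ is governed by the same quantity $\beta_T\max_B|\hat{h}_T(B)-h(B)|$, which need not tend to zero when $\beta_T\to\infty$ and the estimation error decays at an unspecified rate; the paper asserts the kernel convergence without justification. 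The practical difference is that the paper's route needs only that this product tend to zero (which any polynomial estimation rate delivers against the logarithmic $\beta_T$), while your route needs it to be summable, which is strictly harder. To close your argument you would need the quantitative control or the two-timescale decoupling you mention; as submitted, the proposal does not prove the theorem.
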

\begin{proof}
 See Appendix~\ref{appendix:proof-of-strong-ergodicity-for-varying-inverse-temperature-learning}.
\end{proof}

\begin{remark}
Assumption~\ref{assumption:uniqueness-of-maximizer} is a technical requirement for Theorem~\ref{theorem:convergence-virtual-cache-update-learning}. The reason is that, when $\lim_{t \rightarrow \infty} \beta_t=\infty$, the limiting transition probability matrix $Q^*$ of the non-homogeneous Markov chain $V(t)$ is  ergodic if there is a single maximizer in $\arg \max_{B \in \mathcal{B}}h(B)$, otherwise the ergodicity cannot be guaranteed; ergodicity of $Q^*$ is a technical requirement in the proof of Theorem~\ref{theorem:convergence-virtual-cache-update-learning}. However, we considered Algorithm~\ref{algorithm:virtual-cache-update-varying-inverse-temperature} for virtual cache update in the statement of  Theorem~\ref{theorem:convergence-virtual-cache-update-learning}, since it uses increasing $\beta_t$. In practical applications, $\beta$ will be kept constant  at a large but finite value, and $Q^*$ will be irreducible, ergodic in that case even when there are more than one maximizers; hence, Algorithm~\ref{algorithm:virtual-cache-update-basic-gibbs-sampling} for virtual cache update along with popularity and topology learning, will return an optimal configuration with the same high probability even when there are more than one maximizer configurations. Also, uniqueness of the maximizer is a practical assumption since, due to the non-uniform cell structure over a large region, it is highly unlikely that two different configurations will have the same hit rate. 
\end{remark}

\begin{theorem}\label{theorem:asymptotic-optimal-real-cache-update-varying-inverse-temperature-and-learning}
 Under Assumption~\ref{assumption:each-cell-has-a-region-covered-only-by-itself}, Assumption~\ref{assumption:estimates-converge}, 
 Assumption~\ref{assumption:uniqueness-of-maximizer}, 
 Algorithm~\ref{algorithm:virtual-cache-update-learning} for virtual cache update and 
 Algorithm~\ref{algorithm:real-cache-update-algorithm} for real cache update, the conclusions of 
 Theorem~\ref{theorem:asymptotic-optimal-real-cache-update-varying-inverse-temperature} hold.
\end{theorem}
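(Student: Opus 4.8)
The plan is to treat this statement as a corollary that feeds the virtual-cache convergence from the learning regime (Theorem~\ref{theorem:convergence-virtual-cache-update-learning}) into the real-cache ``copying'' machinery already built for the fixed-temperature case (Theorem~\ref{theorem:asymptotic-optimality-of-real-cache-update}). The crucial observation is that Algorithm~\ref{algorithm:real-cache-update-algorithm} never consults the estimates $\hat\theta(k,i,s,t)$ directly: it merely copies the current virtual configuration into the real caches as requests arrive. Hence the real-cache dynamics are driven only by the true content-request Poisson process together with the marginal law of $\{V(t)\}$, and the entire effect of learning is funneled through the behaviour of the virtual chain.

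First I would invoke Theorem~\ref{theorem:convergence-virtual-cache-update-learning}: under Assumptions~\ref{assumption:estimates-converge} and \ref{assumption:uniqueness-of-maximizer} together with Algorithm~\ref{algorithm:virtual-cache-update-learning}, the non-homogeneous chain $\{V(t)\}_{t\ge0}$ is strongly ergodic with limiting law $\pi_{v,\infty}$ equal to the point mass at the unique maximizer $B^\star:=\arg\max_{B\in\mathcal{B}}h(B)$. Strong ergodicity yields total-variation convergence of the one-step marginals, i.e. $\sum_{B\in\mathcal{B}}|\mathbf{P}(V(t)=B)-\pi_{v,\infty}(B)|\to0$, and in particular $\mathbf{P}(V(t)=B^\star)\to1$. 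Since $V(\cdot)$ is piecewise constant on unit intervals and $S_l\uparrow\infty$, this gives $\mathbf{P}(V(S_l-)=B^\star)\to1$.

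Next I would replay the sandwich estimate from the proof of Theorem~\ref{theorem:asymptotic-optimality-of-real-cache-update} with the reference distribution $\pi_\beta(\cdot)$ replaced by the point mass $\pi_{v,\infty}(\cdot)$. Fix $\epsilon>0$; Assumption~\ref{assumption:each-cell-has-a-region-covered-only-by-itself} ensures every BS sees requests at a strictly positive rate, so the fill-in time $T_{B^\star}$ has finite mean, and because $T_{l+1}\uparrow\infty$ we may take $l$ large enough that $\mathbf{P}(T_{B^\star}>\epsilon T_{l+1})<\epsilon$. On the window $[S_l,S_{l+1})$, once every pair $(i,j)$ prescribed by $V(S_l-)$ has received at least one request, $R(\tau)$ coincides with $V(S_l-)$, so exactly the same chain of inequalities gives
\[
\frac{\int_{S_l}^{S_{l+1}}\mathbf{P}(R(\tau)=B^\star)\,d\tau}{T_{l+1}}\ \ge\ (1-\epsilon)^2\,\mathbf{P}(V(S_l-)=B^\star).
\]
Letting $l\to\infty$ and then $\epsilon\downarrow0$ produces the liminf bound $\ge1$; since the integrand is bounded by $1$, the matching limsup is automatic (no Fatou step is needed, because we track the single configuration $B^\star$), giving $\lim_{T\to\infty}\frac1T\int_0^T\mathbf{P}(R(\tau)=B^\star)\,d\tau=1$. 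The hit-rate claim then follows verbatim as in Theorem~\ref{theorem:asymptotic-optimal-real-cache-update-varying-inverse-temperature}, writing $\mathbf{E}(h(R(\tau)))=\sum_{B}\mathbf{P}(R(\tau)=B)\,h(B)$ and noting that all the time-averaged mass concentrates on $B^\star$ with $h(B^\star)=\max_{B\in\mathcal{B}}h(B)$.

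The substantive difficulty is entirely upstream, in Theorem~\ref{theorem:convergence-virtual-cache-update-learning}: one must ensure that the perturbation of the one-step kernels caused by the running estimates is mild enough (via $|Q^{(T)}-Q^\star|\to0$ and \cite[Chapter~$6$, Theorem~$8.5$]{breamud99gibbs-sampling}) that strong ergodicity survives and still delivers total-variation convergence of the marginals. Granted that, the present theorem is essentially bookkeeping; the one place where care is genuinely needed is that Assumption~\ref{assumption:uniqueness-of-maximizer} is what collapses $\pi_{v,\infty}$ to a true point mass, so that ``$R(\tau)=\arg\max_{B\in\mathcal{B}}h(B)$'' refers to a single configuration and the limit can equal $1$ rather than being split across a maximizing set.
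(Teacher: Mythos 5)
Your proposal is correct and follows essentially the same route as the paper, which proves this theorem by pointing back to Theorem~\ref{theorem:asymptotic-optimal-real-cache-update-varying-inverse-temperature} (and hence to the copying argument of Theorem~\ref{theorem:asymptotic-optimality-of-real-cache-update}): feed the strong ergodicity of the learning-driven virtual chain from Theorem~\ref{theorem:convergence-virtual-cache-update-learning} into the unchanged real-cache synchronization argument, noting that Algorithm~\ref{algorithm:real-cache-update-algorithm} never touches the estimates. Your observations that the Fatou step becomes unnecessary for a point-mass limit and that Assumption~\ref{assumption:uniqueness-of-maximizer} is what makes the limit a single configuration are accurate refinements of the paper's terse argument.
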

\begin{proof}
The proof is similar to that of Theorem~\ref{theorem:asymptotic-optimal-real-cache-update-varying-inverse-temperature}.
\end{proof}

\vspace{-5mm}

\section{Performance Improvement Using Gibbs Sampling}\label{section:numerical}
In this section  we discuss the performance of the
proposed {\em Gibbs sampling content placement} (GSCP), which is based on
 Algorithm~\ref{algorithm:virtual-cache-update-basic-gibbs-sampling}. 
We  compare it with two
popular reference solutions: (i) {\em most popular content
placement} (MPCP) in each BS, and (ii) {\em independent content placement} (ICP) as in
\cite{bartek14caching}. Let us recall that, this latter method involves 
supplying all BSs a common  distribution with which each of them has to randomly choose its cache contents; 
this distribution is calculated as a 
function of the content popularities and BS coverage probabilities, so as 
to maximize the average cache hit probability of a typical   request.

\subsection{Optimality}
The MPCP is hit rate optimal when there is no cell overlapping.

The ICP maximizes the conditional cache hit probability (given coverage)  
{\em averaged over all possible locations  of
  BS}  in the (infinite, stationary) model, assuming some given coverage probabilities
(which is the distribution of the number of BSs covering a 
typical point) and independent selection of cache contents at each base station. It outperforms (on average) the MPCP (which can be seen
as  independent content placement with some particular, non-optimized, deterministic
content distribution); see \cite{bartek14caching} for details. The gain with respect to the MPCP is bigger when
there is more cell overlapping in the model. 
Our GSCP maximizes the hit rate (for a finite network deployment region) {\em for any  given  placement of  BSs}.

\subsection{Asymptotic performance}\label{subsection:asymptotic-performance}
It might be interesting to compare first the asymptotic performance of the
three solutions under two extremal situations:

\paragraph{Little overlapping cells} By this we mean a network where the
overlapping of cells is negligible. A specific example would be a Poisson 
Boolean model for which the product of the intensity of BSs and
the mean area of a cell is small. An extremal non-overlapping model
is  the Voronoi or, more generally, any tessellation. 

It is easy to see that in this
regime  all three solutions MPCP, ICP and GSCP are equivalent; all will
tend to store the most popular content in all BS. Hence, the
conditional hit
probability of a typical request, given coverage, is equal to $\frac{\sum_{i=1}^K \lambda_i}{\lambda}$, and 
the cache hit rate per unit covered area becomes $\sum_{i=1}^K \lambda_i$.

\paragraph{Highly overlapping cells} By this we mean a network where the  number of stations covering the
typical location increases in some sense to infinity as it is the case,
e.g., for the Poisson Boolean model with the product of the intensity of BSs and
the mean area of a cell going large.

While MPCP
always offers the same conditional hit probability given coverage (equal to $\frac{\sum_{i=1}^K \lambda_i}{\lambda}$), 
it can be shown under mild conditions that ICP
and GSCP are again equivalent with this conditional hit probability tending
to $1$, thus significantly outperforming the MPCP.

Sparse network and very dense network scenarios are not of  practical
interest. Hence, 
we  provide some numerical examples to show potential
performance improvement of GSCP with respect to ICP and MPCP.  It is to be noted that these numerical examples are  provided only to demonstrate the potential for performance improvement via Gibbs sampling approach. 
Providing guarantees for 
the actual margin of performance improvement for a more realistic network topology (such as Poisson Boolean model for cells) is 
left for future research endeavours.

\subsection{Distributed nature} 
The MPCP is completely distributed, i.e., all BSs fill in their caches
independently, provided that they know the content popularity distribution. This
popularity can be locally estimated, as it is suggested in Section~\ref{section:learning-popularities}. 

The ICP is also distributed, provided that the specific model-optimal
distribution of the contents is fed to the BSs. This
distribution depends on the coverage probabilities, which can be
estimated only over the entire network; they  cannot be calculated locally.
Hence  the ICP requires a central authority for the calculation of the optimal 
content distribution. 

Our GSCP is distributed in the sense that each BS updates its cache using 
only local estimation and local information exchange.

\subsection{Numerical example of performance improvement via Gibbs sampling for various values of $\beta$}\label{subsection:caching-two-dimensional-numerical-example}

We consider six BSs placed inside the unit square  bounded by the lines $x=0,x=1,y=0,y=1$ on the $xy$ plane. 
There are four contents 
$\mathcal{M}=\{1,2,3,4\}$ with popularity vector $(0.3,0.25,0.24,0.21)$. Each BS can store at most two contents (i.e., $K=2$). 
Content requests are being generated over the unit square according to a time and space homogeneous 
Poisson point process with intensity $1$ requests per unit time per unit area.

We consider two possible scenarios for the cells of base stations:
\begin{itemize}
\item {Scenario~$1$:} We assume that the six cells are either square or rectangular in size, and together cover the entire unit square. The corners of the cells are given by $\{(0,0),(0,0.5),(0.5,0),(0.5,0.5)\}$, \\ $\{(0.5,0), (1,0), (0.5,0.5),(1,0.5)\}$, 
$\{(0.5,0.5),(1,0.5),(0.5,1),(1,1)\}$, \\ $\{(0.25,0.25),(0.75,0.25),(0.25,0.75),(0.75,0.75)\}$,  $\{(0,0.5),(0.25,0.5),(0,1),(0.25,1)\}$\\ and $\{(0,0.75),(0,1),(0.5,0.75),(0.5,1)\}$.
\item {Scenario~$2$:} The six base stations are placed uniformly and independently inside the unit square (random placement). The cell of a base station is a circular region centered at it and with radius~$0.35$~units. The placement  realization in this numerical example left $8.45 \%$ area of the unit square uncovered by base stations; this area does not contribute to the cache hit rate. The location of the six base stations for this particular realization are $(0.7215, 0.8286  )$, $(0.3155, 0.8455  )$, $(0.7401, 0.0172 )$, $(0.0821,0.1970 )$, $(0.4580, 0.7946 )$ and $(0.5078, 0.0669)$.
\end{itemize}
For both scenarios, under most popular content placement, the cache hit rate is $(0.3+0.25)=0.55$ multiplied by the fraction of area covered by the base stations (this fraction is $1$ for scenario~$1$ but less than $1$ for scenario~$2$). 

For  scenario~$1$, we have also considered the case where all BSs choose the contents independently with the same probability distribution tuned to maximize the expected hit rate;  the expected hit rate turned out to be $0.6081$ in this case.

If the 
contents in all caches are chosen probabilistically according to the steady state Gibbs distribution $\pi_{\beta}(\cdot)$, 
one can expect that the expected cache hit rate improves as $\beta$ increases, and converges to the maximum possible cache 
hit rate as $\beta \uparrow \infty$. 

The above phenomena for scenario~$1$ and scenario~$2$  have been captured in 
Figure~\ref{fig:caching-plot-six-base-stations}. This figure also shows that even with finite but large $\beta$, significantly higher cache hit rate can be achieved 
asymptotically compared to the most popular content placement strategy for all BSs, and even w.r.t. independent placement of contents 
in the BSs.

\begin{figure*}[t]
\begin{minipage}[r]{0.5\linewidth}
\subfigure{
\includegraphics[width=\linewidth, height=7cm]{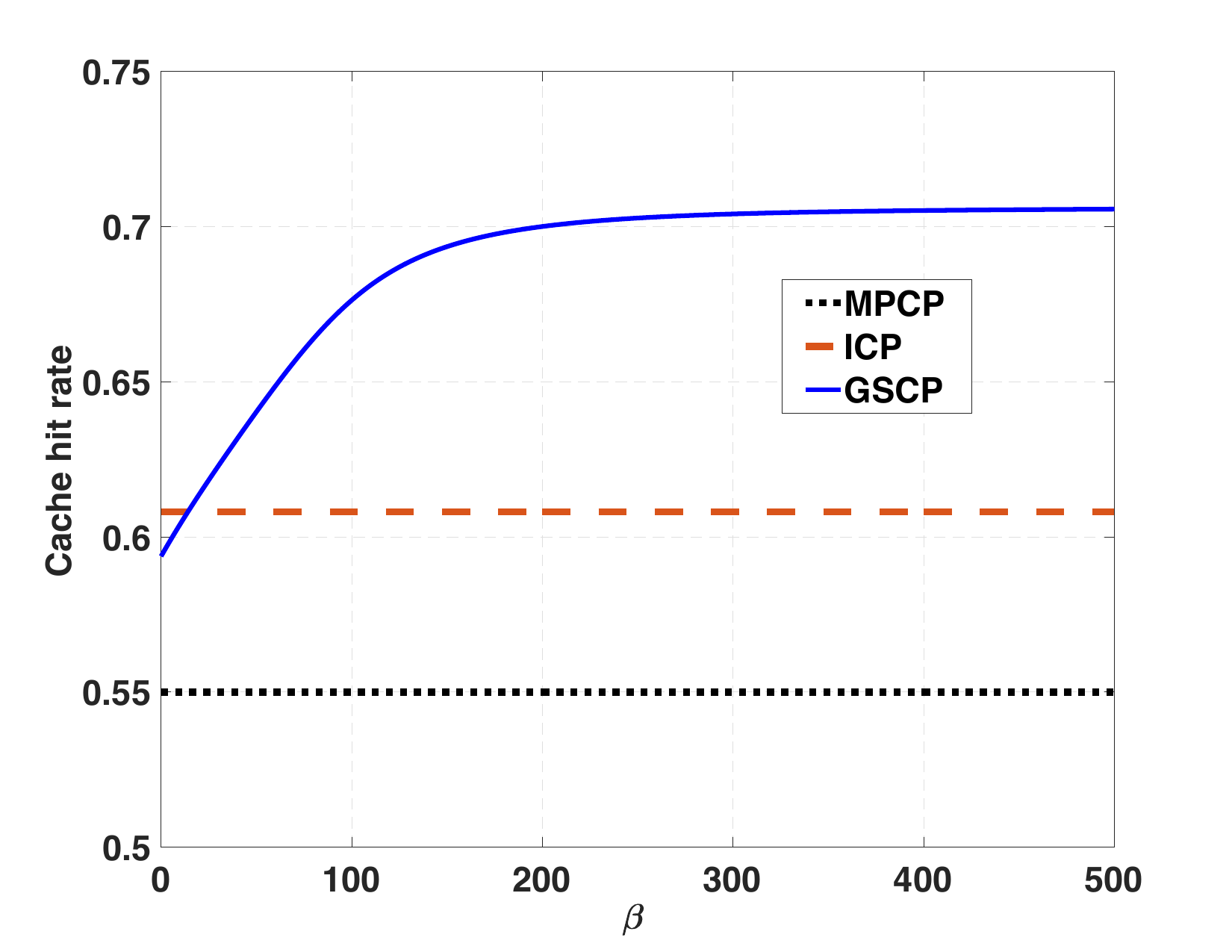}
\includegraphics[width=\linewidth, height=7cm]{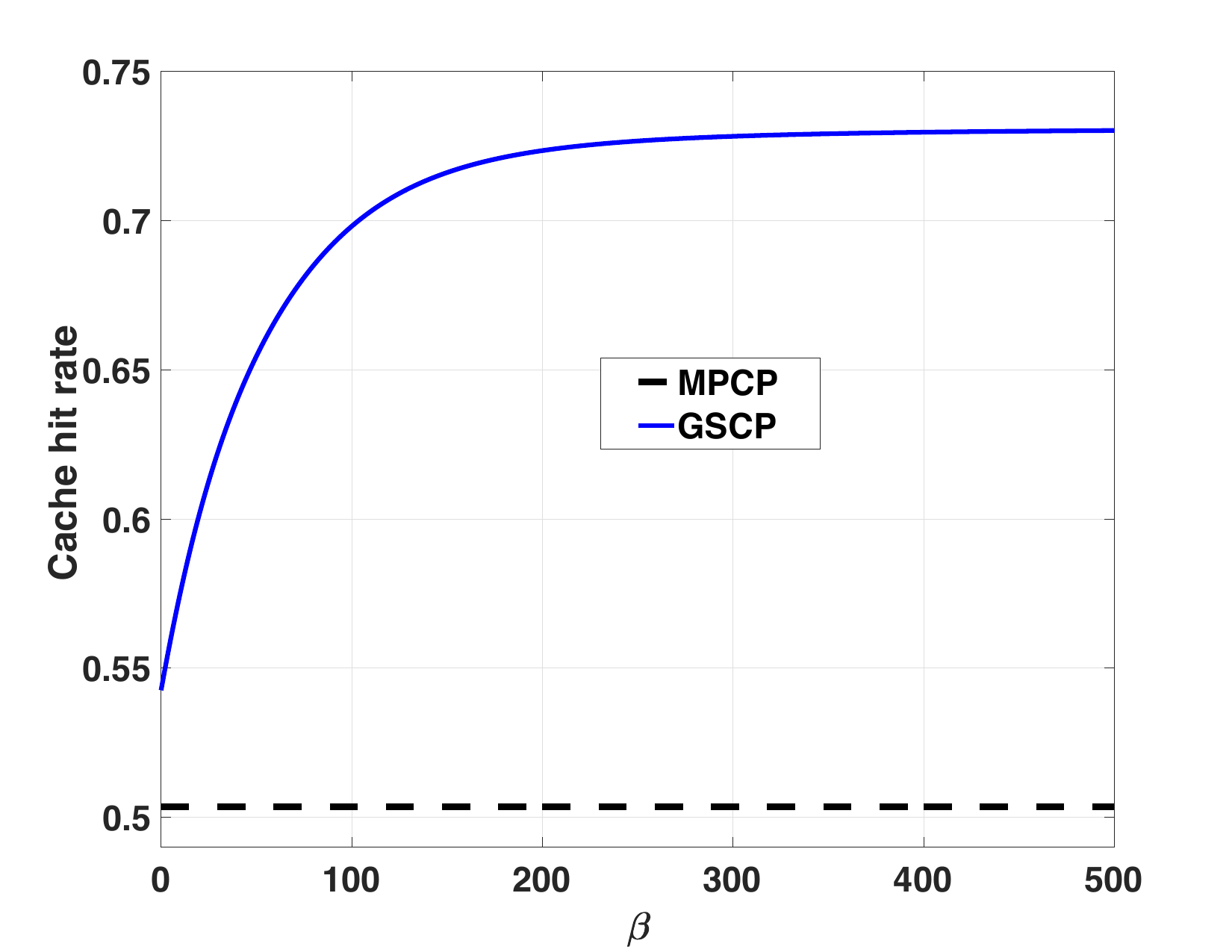}}
\end{minipage} \\ \hfill
\vspace{-15mm}
\caption{Comparison of Gibbs sampling based caching strategy, independent content placement strategy,  and most popular content 
placement strategy, for 
a network with six BSs, four possible contents, and storage capacity for two contents in each BS cache. Detailed system parameters 
can be found in Section~\ref{subsection:caching-two-dimensional-numerical-example}. The figure on the left is for scenario~$1$ and the figure on the right are for scenario~$2$ as described in Section~\ref{subsection:caching-two-dimensional-numerical-example}}
\label{fig:caching-plot-six-base-stations}
\vspace{-10mm}
\end{figure*}

\begin{figure*}[t]
\begin{minipage}[r]{0.5\linewidth}
\subfigure{
\includegraphics[width=\linewidth, height=5cm]{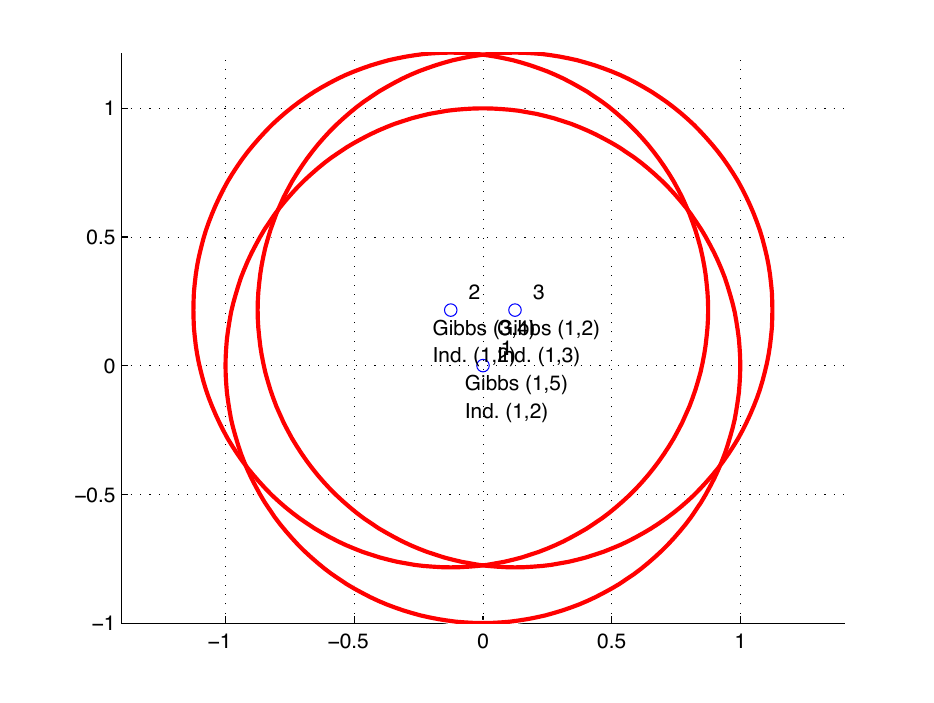}
\includegraphics[width=\linewidth, height=5cm]{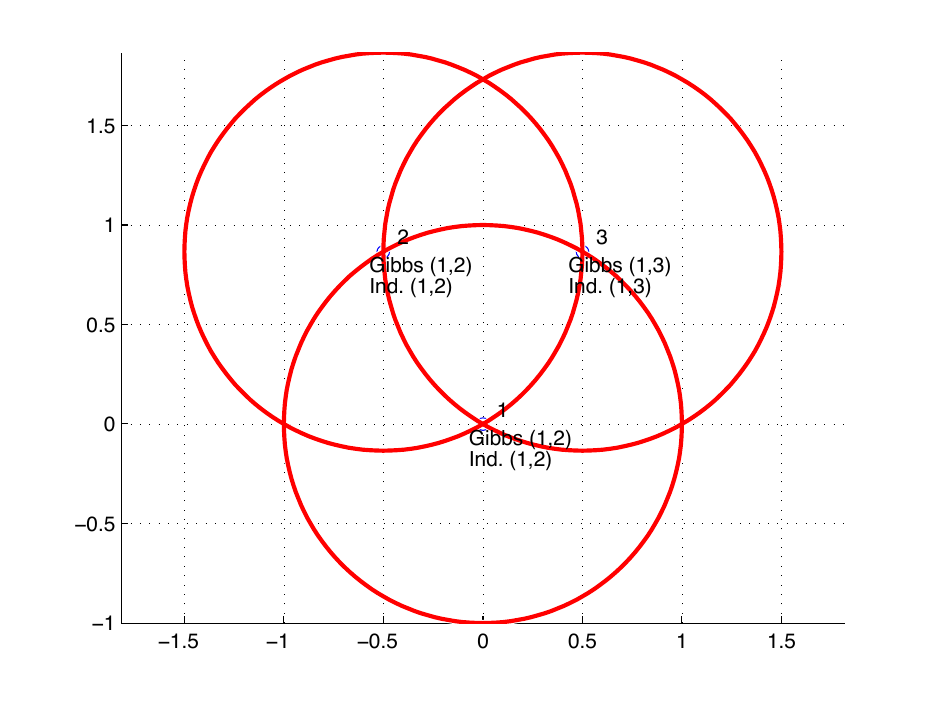}}
\end{minipage} \\ \hfill
\vspace{-5mm}
\begin{minipage}[r]{0.5\linewidth}
\subfigure{
\includegraphics[width=\linewidth, height=6cm]{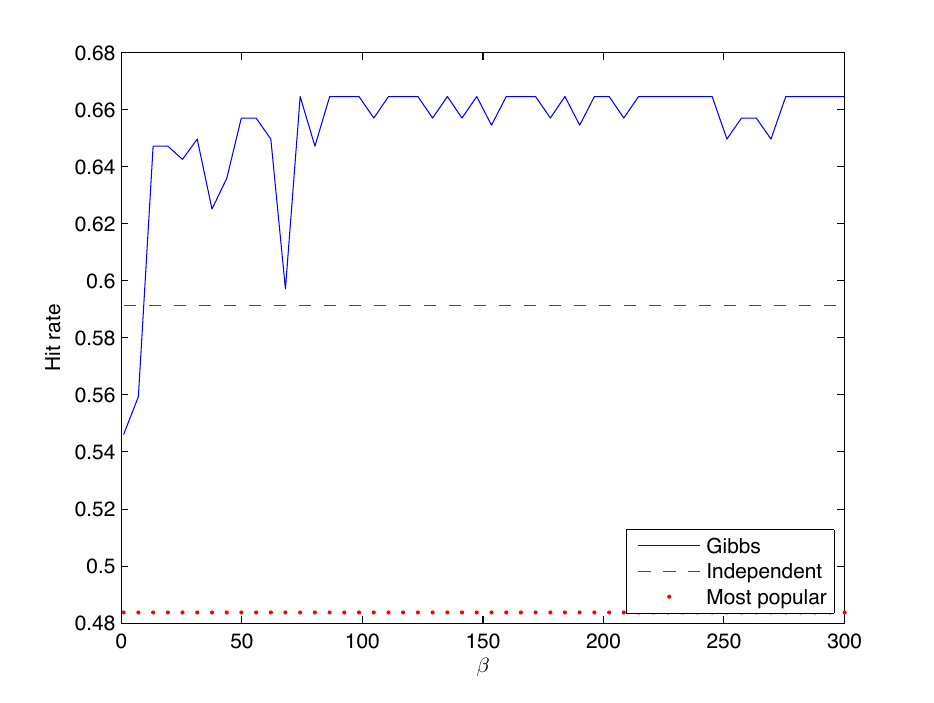}
\includegraphics[width=\linewidth, height=6cm]{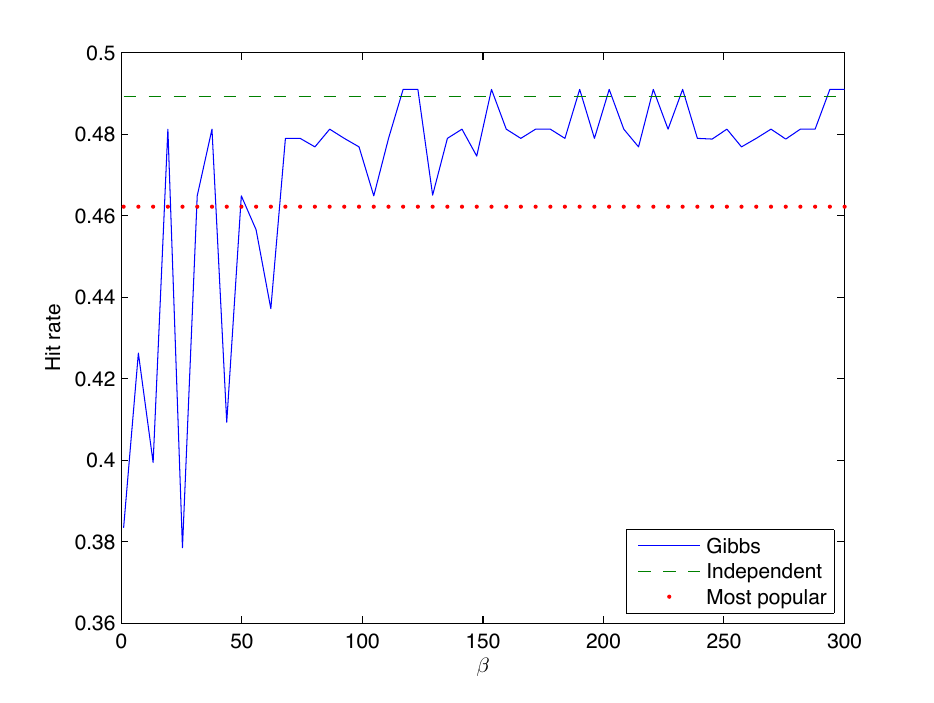}}
\end{minipage} \\ \hfill
\vspace{-15mm}
\caption{Comparison of Gibbs sampling  based caching with only $100$~iterations, against independent content placement strategy,  and most popular content placement strategy, for 
three BSs, five possible contents, and storage capacity of two contents per cache. Content popularities come from from Zipf distribution with parameter $\gamma=1.1$.  Details 
can be found in Section~\ref{subsection:gibbs-finite-iterations}. The top-left diagram shows more overlap among cells, whereas the top-right diagram shows less overlap. The diagrams at the bottom row correspond to the performance comparison among algorithms for these two cases.}
\label{fig:gibbs-plot-finite-iterations}
\vspace{-6mm}
\end{figure*}

\subsection{Effect of finite number of iterations, $\beta$, and cell overlap}\label{subsection:gibbs-finite-iterations}
In this subsection, we demonstrate the caching performance of Gibbs sampling with only a finite number of iterations. We consider two different cases: (i) three base stations on the plane, each with unit radius, more overlap among cells, and (ii) three base stations on the plane, each with unit radius, less overlap among cells. The set of contents are  $\mathcal{M}=\{1,2,3,4,5\}$ with their popularities coming from a Zipf distribution with parameter $\gamma=1.1$. Each cache can store at most two contents (i.e., $K=2$).

For these two cases, for various values of $\beta$, we simulated the Gibbs sampling algorithm (Algorithm~\ref{algorithm:virtual-cache-update-basic-gibbs-sampling}) for $100$~iterations, noted the configuration obtained after the $100$-th iteration, and computed the cache hit rates for these configurations via simulation. Next, we compared them against cache hit rates for most popular content placement and independent content placement schemes. The results are summarized in Figure~\ref{fig:gibbs-plot-finite-iterations}, where hit rates are computed per unit area of the entire window and not over the region covered by base stations alone. By the discussion provided in Section~\ref{subsection:asymptotic-performance}, we can expect that Gibbs sampling and independent content placement algorithms are both optimal if the cells become more overlapping. It is indeed seen in Figure~\ref{fig:gibbs-plot-finite-iterations} that the performances of Gibbs sampling and independent content placement algorithms are much better than most popular content placement, in case there is more overlapping among cells. It is also seen that the performance of Gibbs sampling tends to be better than independent content placement algorithm for large $\beta$. However, it is important to remember that we have only provided result for one sample path for each $\beta$; since we have taken only $100$~iterations for Gibbs sampling, the results will vary if another independent sample path is chosen for the Gibbs sampling algorithm. Hence, Figure~\ref{fig:gibbs-plot-finite-iterations} only demonstrates the {\em potential} performance improvement by Gibbs sampling over finite time; on the other hand, Section~\ref{subsection:caching-two-dimensional-numerical-example} demonstrates that Gibbs sampling  asymptotically achieves higher hit rate than independent content placement strategy  and most popular content placement strategy. 

\begin{figure*}[t]
\includegraphics[width=0.5 \linewidth, height=7cm]{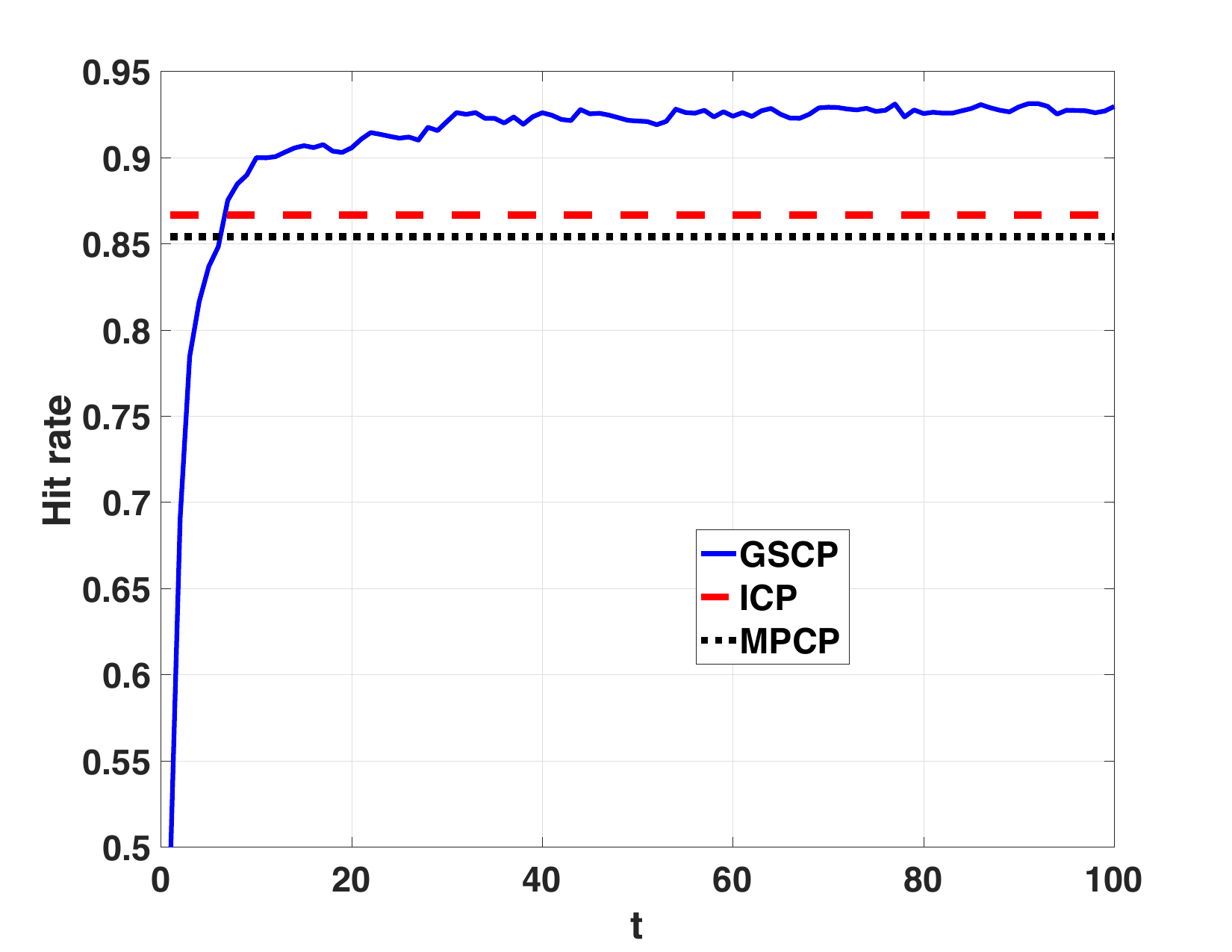}
\includegraphics[width=0.5 \linewidth, height=7cm]{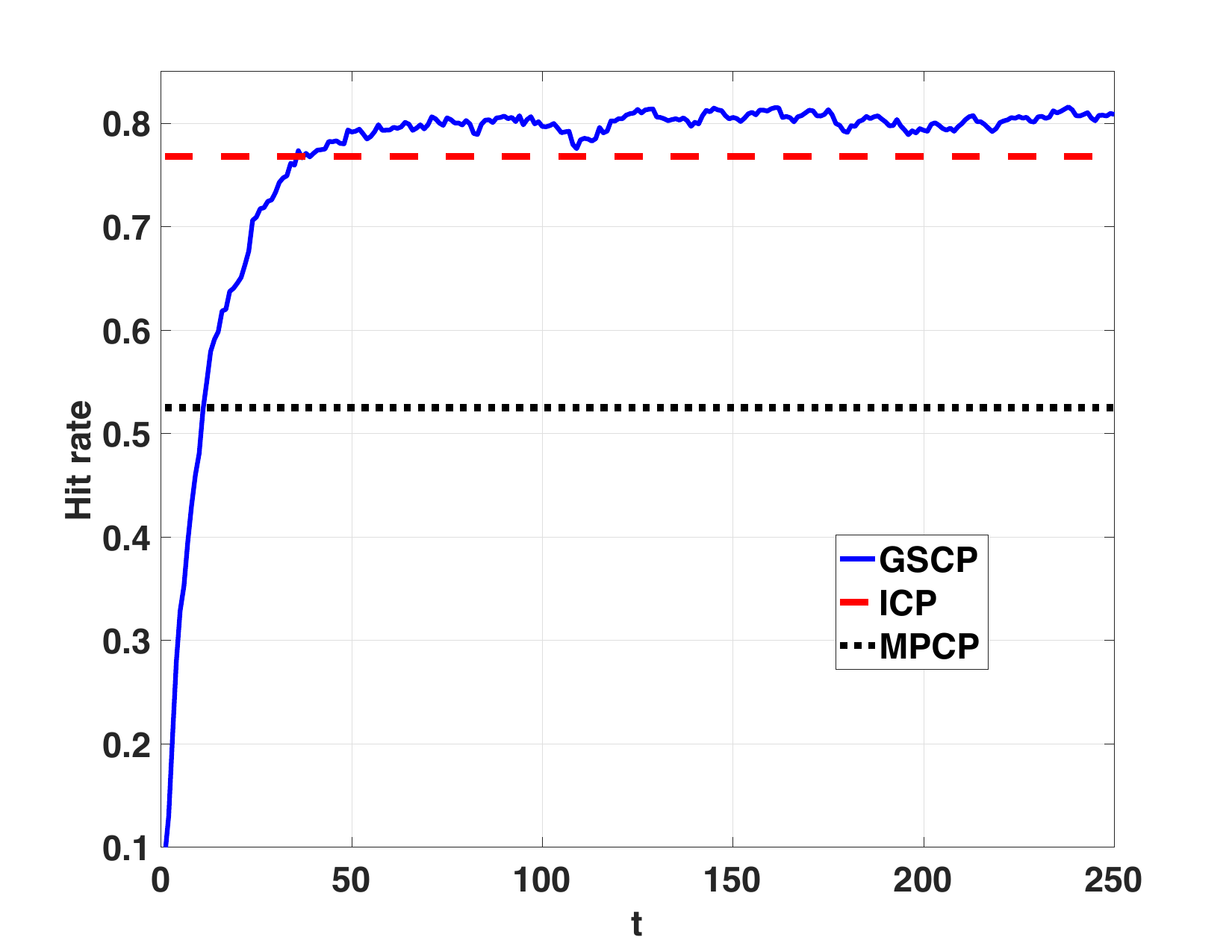}
\caption{Demonstration of convergence speed of GSCP under $\beta=150$, and performance improvement over ICP and MPCP.  Left: $N=5$, $M=5$, $K=2$, Zipf popularity distribution with parameter $\gamma=2$, averaged over $10$ sample paths. Right: $N=20$, $M=14$, $K=5$, Zipf popularity distribution with parameter $\gamma=0.5$, averaged over $4$ sample paths.  Details are provided in Section~\ref{subsection:mixing-time-given-beta}.}
\label{fig:mixing-time-plots}
\vspace{0mm}
\end{figure*}

\begin{figure*}[t]
\includegraphics[width=0.5 \linewidth, height=7cm]{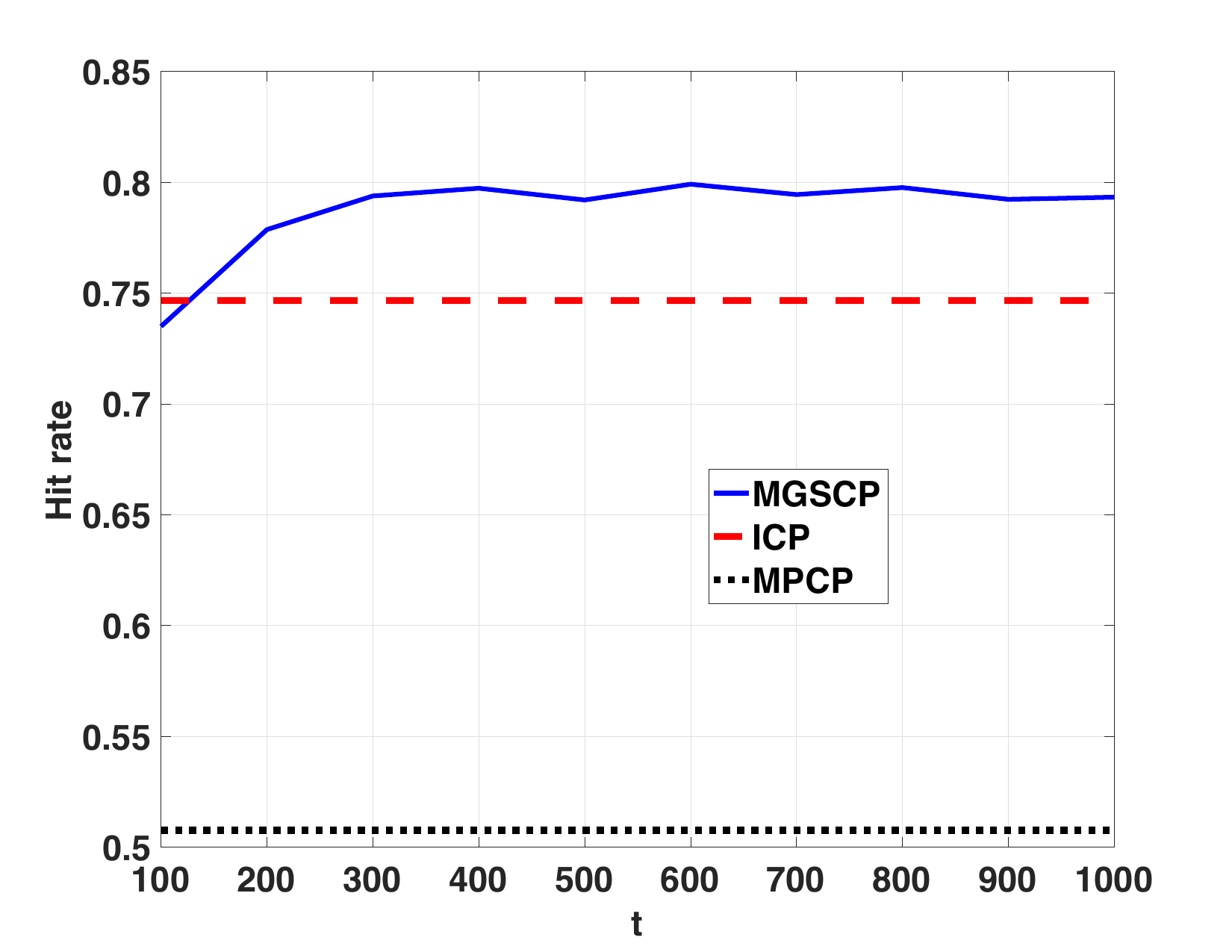}
\includegraphics[width=0.5\linewidth, height=7cm]{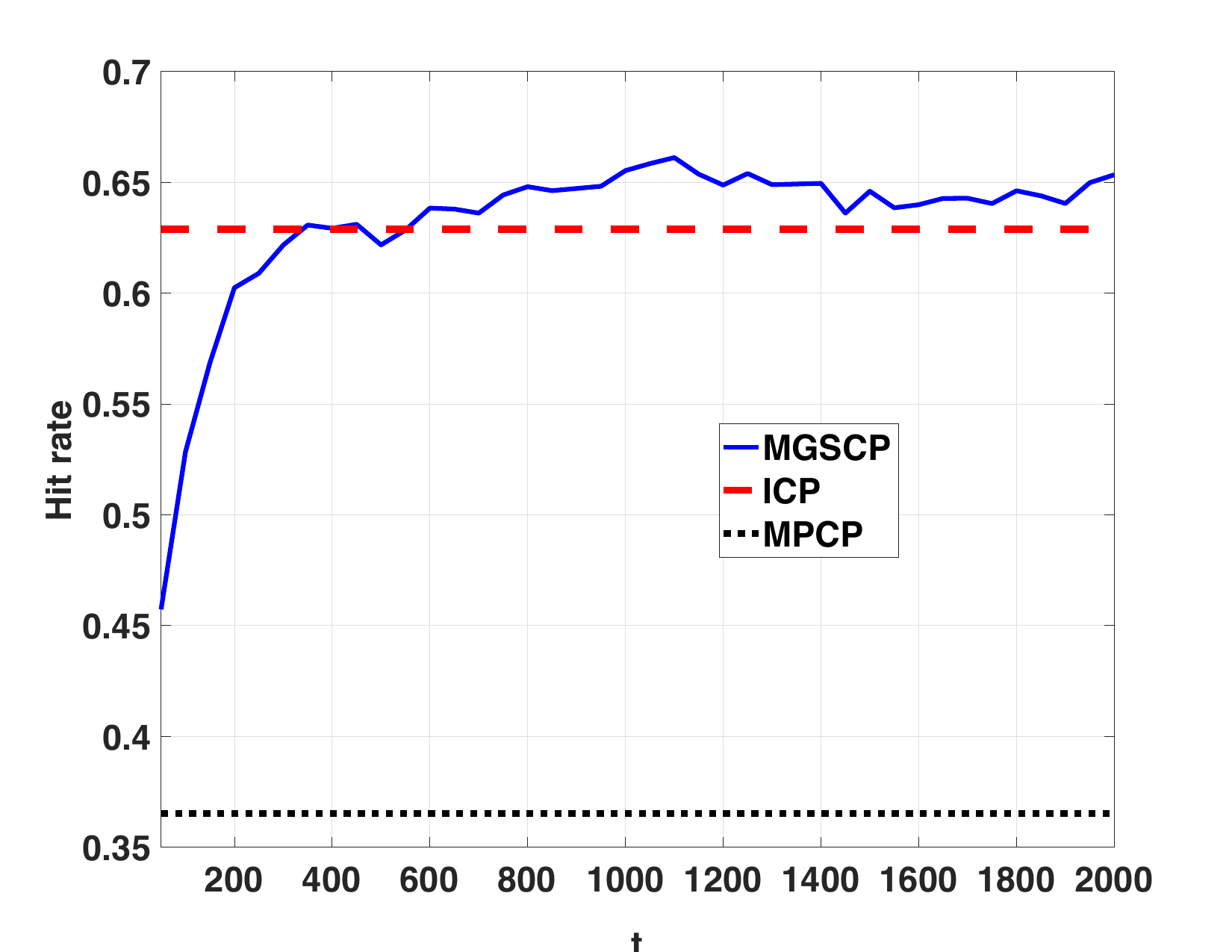}
\caption{Demonstration of convergence speed of modified GSCP (MGSCP), and performance improvement over ICP and MPCP.  Left: $N=20$, $M=50$, $K=15$, random popularity distribution, $\beta=1000$, averaged over $4$ sample paths. Right: $N=20$, $M=100$, $K=20$, random popularity distribution, $\beta=2000$,  single sample path. For MGSCP, all caches initially start with $K$ most popular contents. Details are provided in Section~\ref{subsection:mixing-time-given-beta}.}
\label{fig:large-scale-mixing-time-plots}
\vspace{0mm}
\end{figure*}

\subsection{Numerical example for mixing time and performance improvement of Gibbs sampling}\label{subsection:mixing-time-given-beta}
Now we demonstrate the speed of convergence of Gibbs sampling for fixed $\beta$. Location of $N$ base stations are generated independently with uniform distribution over the unit square, and the cell radius is assumed to be $\sqrt{\frac{5}{•\pi N}}$. Popularities of $M$ contents are generated either independently with uniform distribution, or they are assumed to follow Zipf distribution with parameter $\gamma$.  

{\bf Results for small system size:} For GSCP, each cache is assumed to be empty at $t=0$. The performance of GSCP for $\beta=150$, averaged over multiple independent sample paths, is compared against MPCP and ICP for various values of $N$, $M$ and the cache size $K$; for GSCP, at each $t$, hit rate for the current cache configuration is considered. Cache hit rate under GSCP is plotted against $t$ in Figure~\ref{fig:mixing-time-plots}. The results show that, GSCP outperforms MPCP and ICP significantly and reaches stationary distribution for even $t \leq 50$ if $N =5$; for $N=20$, the stationary distribution is nearly achieved starting from $t=100$. Of course, the convergence will be slower if $N$, $M$ and $K$ are increased further; for large values of $N$, $M$ and $K$, one can  simply use GSCP with only highly popular contents (for example, most popular contents whose collective popularity is $0.95$ or above). 

{\bf Results for large system size:} As discussed in Remark~\ref{remark:complexity-issue-in-basic-Gibbs}, the $O({{M} \choose {K}})$ computations per iteration in Gibbs sampling (using \eqref{eqn:third-expression-for-conditional-probability}) can be prohibitive for GSCP to be applied to a large scale system. We  alleviate this problem by proposing a simple modified GSCP algorithm (which we call MGSCP) where, at each iteration, only one randomly selected content is removed from a randomly selected BS, and then it is replaced by one content (absent in the cache after the removal) randomly via Gibbs sampling; thus, the denominator in \eqref{eqn:third-expression-for-conditional-probability} is replaced by a summation over all $(M-K+1)$ configurations that can possibly result via this replacement operation. Clearly this requires only $O(M-K+1)$ computations per iteration of Gibbs sampling and hence is easily implementable. This might reduce the convergence speed, but that can be compensated if one runs this iteration multiple times between two successive discrete time instants. However, here we assume that this update is done only once at each $t$. To reduce computation, we compute the hit rate only when $t$ is an integer multiple of either $50$ or $100$. Figure~\ref{fig:large-scale-mixing-time-plots} demonstrates that the MGSCP algorithm may take at most a few hundred iterations before it starts outperforming  ICP, and the convergence to steady state distribution is also clear from the plots; a few hundred iterations is not big for this large scale system (with $N=20$ and $M=50$ or $100$), especially keeping in mind that multiple iterations can be performed in practice between two successive decision instants. Thus, MGSCP provides a fast, distributed, optimal algorithm for content placement in a large system.

\vspace{-5mm}

\section{Conclusion}\label{section:conclusion}
In this paper, we have provided  
algorithms for cache content update in a cellular network, motivated by Gibbs sampling techniques. 
The algorithms were shown to converge asymptotically to the optimal content placement in the caches. It turns out that the computation and 
communication cost is affordable for practical cellular network base stations.

While the current paper solves an important problem, there are still possibilities for numerous interesting extensions: 
(i) We assumed uniform download cost from the backhaul network for all base stations. However, this is not in general true. Depending on the backhaul 
architecture, backhaul link capacities and congestion scenario, it might be more desirable to avoid download from some specific base stations. 
Even different base stations might have different link capacities, and in practice, this will result in queueing delay for the download process. 
Contents might be of various classes, and hence may not have fixed size. 
Hence, a combined formulation of cache update and backhaul network state evolution will be necessary. (ii) Different cells might witness different 
content popularities, but this has not been addressed in the current paper. (iii) Once a content becomes irrelevant (e.g., a news video), it has to be removed 
completely from all caches; one needs to develop techniques to detect when to remove a content from all caches. 
(iv) Providing convergence rate guarantees when the inverse temperature is increasing and when arrival rates and cell topology are learnt over time, 
is a very challenging problem. 
We leave these issues for future research endeavours on this topic.

{\small
\bibliographystyle{unsrt}
\bibliography{arpan-techreport}
}

\vspace{-20mm}

\begin{IEEEbiography}[{\includegraphics[width=1in,height=1in,clip,keepaspectratio]{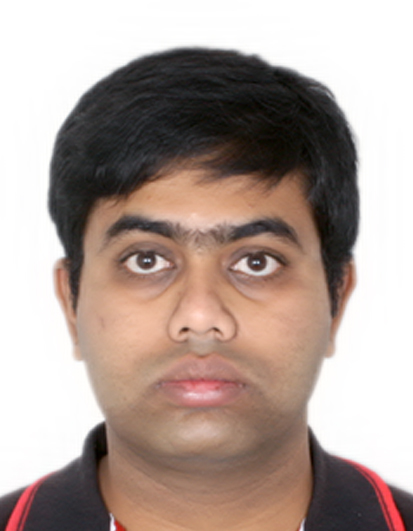}}]{Arpan 
Chattopadhyay} obtained his B.E. in Electronics and Telecommunication Engineering from Jadavpur University, 
Kolkata, India in the year 2008, and M.E. and Ph.D in Telecommunication Engineering from Indian Institute of Science, 
Bangalore, India in the year 2010 and 2015, respectively. He is currently working in the Ming Hsieh Department of Electrical Engineering, University of Southern California, Los Angeles  as a postdoctoral researcher. Previously he worked as a postdoc in INRIA/ENS Paris. 
His research interests include wireless networks, cyber-physical systems, machine learning and control.
    \end{IEEEbiography}

\vspace{-20mm}

\begin{IEEEbiography}[{\includegraphics[width=1in,height=1in,clip,keepaspectratio]{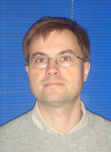}}]
{Bartlomiej Blaszczyszyn} received his PhD degree and Habilitation qualification
in applied mathematics from University of Wroclaw (Poland) in 1995
and 2008, respectively. He is now a Senior Researcher at Inria (France), and
a member of the Computer Science Department of Ecole Normale Superieure
in Paris. His professional interests are in applied probability, in particular in
stochastic modeling and performance evaluation of communication networks.
He coauthored several publications on this subject in major international
journals and conferences, as well as a two-volume book on {\em Stochastic
Geometry and Wireless Networks} NoW Publishers, jointly with F. Baccelli.
    \end{IEEEbiography}

\vspace{-20mm}

\begin{IEEEbiography}[{\includegraphics[width=1in,height=1in,clip,keepaspectratio]{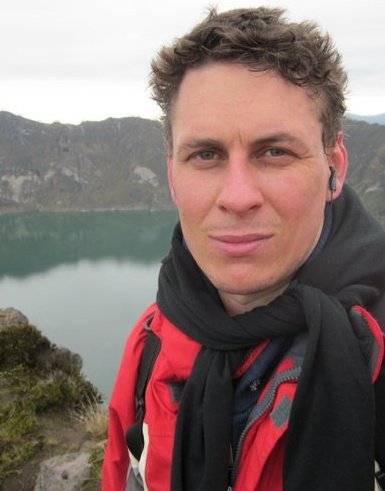}}]{H. Paul Keeler} graduated in physics and applied mathematics in 2006 from Griffith University. He received a Ph.D. in applied mathematics in 2010 from the University of Melbourne. After stints as a consultant in market research and banking and a guest researcher at the University of Zaragoza, Spain, he worked for two years as a researcher at Inria and École Normale Supérieure, Paris, where he was partly funded by Orange Labs. Currently he is a researcher at Weierstrass Institute (or WIAS), Berlin. His research interests lie in applied probability, numerical and asymptotic methods, communication networks.

    \end{IEEEbiography}

%
%

\appendices

\vspace{-5mm}

\section{Definition of weak and strong ergodicity}
\label{appendix:weak-and-strong-ergodicity}
Let us consider a discrete-time inhomogeneous Markov chain $\{X(t)\}_{t \geq 0}$ whose transition probability matrix (t.p.m.) between 
$t=m$ and $t=m+n$ is given by $P(m;n)$. Let $\mathcal{D}$ be the collection of all possible distributions 
(each element in $\mathcal{D}$ is assumed to be a row vector) on the state space. 
Then $\{X(t)\}_{t \geq 0}$ is called weakly ergodic if, for all $m \geq 0$,  
$$\lim_{n \uparrow \infty} \sup_{\mu,\nu \in \mathcal{D}}  d_V (\mu P(m;n) , \nu P(m;n) ) =0, $$
where $d_V(\cdot,\cdot)$ is the total variation distance between two distributions. 

$\{X(t)\}_{t \geq 0}$ is called strongly ergodic if there exists $\pi \in \mathcal{D}$ such that, for all $m \geq 0$, 
$$\lim_{n \uparrow \infty} \sup_{\mu \in \mathcal{D}}  d_V (\mu^{T} P(m;n) , \pi ) =0.$$

\vspace{-5mm}

\section{Proof of Theorem~\ref{theorem:asymptotic-optimality-of-real-cache-update}}
\label{appendix:real-cache-update-basic-gibbs-proofs}

Fix a small $\epsilon>0$. Under configuration $B$ of the {\em virtual caches}, 
let us denote the total time $T_B$ (a generic random variable) 
 taken by the arrival process so that, for  all possible pairs $(i,j) \in \{1,2,\cdots,M\} \times \{1,2,\cdots,N\}$ 
 there is at least one request for content $i$ to BS~$j$ if virtual configuration~$B$ 
 suggests placing content~$i$ at BS~$j$; clearly $\mathbf{E}(T_B)<\infty$, since we have made 
 Assumption~\ref{assumption:each-cell-has-a-region-covered-only-by-itself}.   
Let us consider $l \in \mathbb{Z}_{+}$ large enough such that: 
(i) $\sum_{B \in \mathcal{B}} |\mathbf{P}(V(t)=B)-\pi_{\beta}(B)|<\epsilon$ for all integer $t \geq S_{l-1}$, 
(ii) $\mathbf{P}(T_B>\epsilon T_{l+1}) < \epsilon$. 

Now, 

\footnotesize
\begin{eqnarray}
 && \frac{  \int_{S_l}^{S_{l+1}} \mathbf{P}(R(\tau)=B) d \tau  }{T_{l+1}} \nonumber\\
 & \geq & \frac{ \mathbf{E} \int_{\min\{S_l+T_B^{'},S_{l+1} \} }^{S_{l+1}} \mathbf{P}(R(\tau)=B) d \tau }{T_{l+1}} \nonumber\\
 & \geq & \frac{  \mathbf{P}(T_B^{'} \leq \epsilon T_{l+1})  \mathbf{E} \bigg(  \int_{\min\{S_l+T_B^{'},S_{l+1} \} }^{S_{l+1}} \mathbf{P}(R(\tau)=B) d \tau  \bigg| T_B^{'} \leq \epsilon T_{l+1}   \bigg)}{T_{l+1}  } \nonumber\\
  & \geq & \frac{ (1-\epsilon) \int_{S_l+\epsilon T_{l+1} }^{S_{l+1}} \mathbf{P}(R(\tau)=B| T_B^{'} \leq \epsilon T_{l+1}) d \tau }{T_{l+1}} \nonumber\\
  & = & \frac{ (1-\epsilon) (T_{l+1}-\epsilon T_{l+1}) \mathbf{P}(V(S_l-)=B) }{T_{l+1}} \nonumber\\
    & \geq & (1-\epsilon)^2  (\pi_{\beta}(B)-\epsilon), \nonumber\\
  \end{eqnarray}
  \normalsize
  
\noindent  where $T_B^{'}$ has the same distribution as $T_B$. The equality step follows from the fact that for $\tau > S_l+ \epsilon T_{l+1}$, we have $\mathbf{P}(R(\tau)=B| T_B^{'} \leq \epsilon T_{l+1}) =V(S_l-)$, since all real caches are updated to $V(S_l-)$ within  $\tau \leq S_l+T_B^{'}$.
  
  Hence, 
  \begin{eqnarray*}
 && \liminf_{T \rightarrow \infty} \frac{  \int_0^T \mathbf{P}(R(\tau)=B) d \tau  }{T} \\
 &=& \liminf_{T \rightarrow \infty} \frac{  \int_{S_l}^T \mathbf{P}(R(\tau)=B) d \tau  }{T-S_l} \\
 &\geq & (1-\epsilon)^2  (\pi_{\beta}(B)-\epsilon).
 \end{eqnarray*}

Since $\epsilon>0$ is arbitrarily small, we have: 
$$\liminf_{T \rightarrow \infty} \frac{  \int_0^T \mathbf{P}(R(\tau)=B) d \tau  }{T} \geq \pi_{\beta}(B).$$ 

But, by Fatou's lemma (see \cite[Chapter~$4$]{royden-fitzpatrick10real-analysis}), 
\begin{eqnarray*}
&& \sum_{B \in \mathcal{B}}\liminf_{T \rightarrow \infty} \frac{  \int_{0}^{T} \mathbf{P}(R(\tau)=B) d \tau  }{T} \\
& \leq & \liminf_{T \rightarrow \infty} \sum_{B \in \mathcal{B}} \frac{  \int_{0}^{T} \mathbf{P}(R(\tau)=B) d \tau  }{T}=1 
\end{eqnarray*}
and 
$\sum_{B \in \mathcal{B}} \pi_{\beta}(B)=1$. Hence, we must have 
$\liminf_{T \rightarrow \infty} \frac{  \int_{0}^{T} \mathbf{P}(R(\tau)=B) d \tau  }{T} = \pi_{\beta}(B)$ for all $B \in \mathcal{B}$. 

On the other hand, 
$$ \limsup_{T \rightarrow \infty} \frac{  \int_0^T \mathbf{P}(R(\tau)=B) d \tau  }{T} 
  = \limsup_{l \rightarrow \infty} \frac{  \int_{S_l}^{S_{l+1}} \mathbf{P}(R(\tau)=B) d \tau  }{T_{l+1}} .$$

Now, 

\small
\begin{eqnarray*}
 &&   \frac{  \int_{S_l}^{S_{l+1}} \mathbf{P}(R(\tau)=B) d \tau  }{T_{l+1}} \\
     & \leq &  \frac{  \int_{ S_l+ \epsilon T_{l+1} }^{S_{l+1}} \mathbf{P}(R(\tau)=B ) d \tau +  \epsilon T_{l+1}  }{T_{l+1}} \\
    & \leq &  \frac{  \int_{ S_l+ \epsilon T_{l+1} }^{S_{l+1}} \mathbf{P}(R(\tau)=B | T_B^{'} \leq \epsilon T_{l+1}) d \tau + 2 \epsilon T_{l+1}  }{T_{l+1}} \\
    & \leq &  \frac{  \int_{ S_l+ \epsilon T_{l+1} }^{S_{l+1}} \mathbf{P}(V(S_l-)=B) d \tau + 2 \epsilon T_{l+1}  }{T_{l+1}} \\
    &\leq & \pi_{\beta} (B) + 3 \epsilon ,
    \end{eqnarray*}
\normalsize

\noindent
where the second inequality follows from the fact that for $\tau \in [S_l+ \epsilon T_{l+1}, S_{l+1})$:

\small
\begin{eqnarray*}
 && \mathbf{P}(R(\tau)=B ) \\
 &\leq &  \mathbf{P}(R(\tau)=B | T_B^{'} \leq \epsilon T_{l+1}) \\
 && + \mathbf{P}(T_B^{'} > \epsilon T_{l+1}) \mathbf{P}(R(\tau)=B | T_B^{'} > \epsilon T_{l+1}) \\
 & \leq & \mathbf{P}(R(\tau)=B | T_B^{'} \leq \epsilon T_{l+1}) + \epsilon .
\end{eqnarray*}
\normalsize

Since $\epsilon>0$ is arbitrarily small, we can say that:  
$\limsup_{T \rightarrow \infty} \frac{  \int_0^T \mathbf{P}(R(\tau)=B) d \tau  }{T} \leq \pi_{\beta}(B).$

Hence, $\lim_{T \rightarrow \infty} \frac{  \int_0^T \mathbf{P}(R(\tau)=B) d \tau  }{T} = \pi_{\beta}(B)$.

\vspace{-5mm}

\section{Proof of Theorem~\ref{theorem:strong-ergodicity-varying-inverse-temperature}}\label{appendix:proof-of-strong-ergodicity-for-varying-inverse-temperature}
In this proof, we will use the notion of weak and strong ergodicity of time-inhomogeneous Markov chains from 
\cite[Chapter~$6$, Section~$8$]{breamud99gibbs-sampling}), which is provided in Appendix~\ref{appendix:weak-and-strong-ergodicity}.

Fix  $k =0$. 
We will first show that the Markov chain $\{V(t)\}_{t \geq 0}$ in weakly ergodic.

Let us consider the {\em transition probability matrix (t.p.m.)} $Q_l$ for the inhomogeneous Markov 
chain $\{Y(l)\}_{l \geq 0}$, where $Y(l):=V(lN)$.  
Then, the Dobrushin's ergodic coefficient $\delta(Q_l)$ is given by 
(see \cite[Chapter~$6$, Section~$7$]{breamud99gibbs-sampling} for definition) 
$\delta(Q_l)=1- \inf_{B^{'},B^{''} \in \mathcal{B}} \sum_{B \in \mathcal{B}} \min \{Q_l(B^{'},B),Q_l(B^{''},B) \}$. 
The Markov chain $\{V(t)\}_{t \geq 0}$ is weakly ergodic if $\sum_{l=1}^{\infty}(1-\delta(Q_l))=\infty$ (by 
\cite[Chapter~$6$, Theorem~$8.2$]{breamud99gibbs-sampling}).

Now, with positive probability, virtual caches in all nodes are updated over a period of $N$ slots. Hence, any  $B \in \mathcal{B}$ 
can be reached over a period of 
$N$ slots, starting from any other $B^{'} \in \mathcal{B}$. Note that, once a base station $j_t$ 
is chosen  in Algorithm~\ref{algorithm:virtual-cache-update-basic-gibbs-sampling} at discrete time $t \in \{lN,lN+1,\cdots,lN+N-1\}$, the sampling probability for any set of contents in 
its virtual cache in a slot 
is lower bounded by $\frac{e^{-\beta_t \Delta}}{{{M}\choose{K}}} \geq \frac{e^{-\beta_{lN+N} \Delta}}{{{M}\choose{K}}}$, since $t <lN+N$. 
Hence, for independent sampling over $N$ slots, we will always have 
$Q_l(B^{'},B) \geq \bigg( \frac{e^{-\beta_{lN+N} \Delta}}{N{{M}\choose{K}}} \bigg)^N >0$ for all pairs $B^{'},B$. 
Hence, 

\small
\begin{eqnarray}
&&\sum_{l=0}^{\infty}(1-\delta(Q_l)) \nonumber\\
&=& \sum_{l=0}^{\infty} \inf_{B^{'},B^{''} \in \mathcal{B}} \sum_{B \in \mathcal{B}} \min \{Q_l(B^{'},B),Q_l(B^{''},B) \} \nonumber\\
& \geq & \sum_{l=0}^{\infty} \sum_{B \in \mathcal{B}} \bigg( \frac{e^{-\beta_0 \log(1+lN+N) \times \Delta}}{N{ {M}\choose{K} }} \bigg)^N \nonumber\\
& = & { {M}\choose{K} }^N \times  \frac{1}{(N{ {M}\choose{K} })^N} \sum_{l=0}^{\infty}  e^{- N \Delta \beta_0 \log(1+lN+N)}  \nonumber\\
& = &  \frac{1}{N^N} \sum_{l=0}^{\infty}  \frac{1}{  (1+lN+N)^{\beta_0 N \Delta}}  \nonumber\\
& \geq &  \frac{1}{N^N } \frac{1}{N}\sum_{t=N+1}^{\infty}  \frac{1}{  (1+t)^{\beta_0 N \Delta}}  \nonumber\\
& = & \infty.
 \end{eqnarray}
\normalsize
Here the last step follows from the fact that $\sum_{t=1}^{\infty} \frac{1}{t^a}$ diverges for $0 <a<1$. The second equality follows from the fact that there are ${{M}\choose{K}}^N$ possible configurations.

 Hence, the Markov chain  $\{V(t)\}_{t \geq 0}$ is   weakly ergodic.
 
 Now we will use \cite[Chapter~$6$, Theorem~$8.3$]{breamud99gibbs-sampling} to prove strong ergodicity of 
 $\{V(t)\}_{t \geq 0}$.

 Let us denote the t.p.m. of $\{V(t)\}_{t \geq 0}$ at a specific time $t=T$ 
 by $Q^{(T)}$ (a specific matrix). If the Markov chain $\{V(t)\}_{t \geq 0}$ is allowed to evolve up to infinite time 
 with {\em fixed} t.p.m. $Q^{(T)}$, then we will get stationary distribution $\pi_{\beta_T}(B)= \frac{e^{\beta_T h(B)}}{Z_{\beta_T}}$. 
 This satisfies Condition~$8.9$ of \cite[Chapter~$6$, Theorem~$8.3$]{breamud99gibbs-sampling}. 
 
 Now we will check Condition~$8.10$ of \cite[Chapter~$6$, Theorem~$8.3$]{breamud99gibbs-sampling}. 
 For any $B \in \arg \max_{B^{'} \in \mathcal{B}} h(B^{'})$,  it is easy to see that $\pi_{\beta_T}(B)$ increases with $T$ for 
 large $T$ (can be seen by considering derivative of $\pi_{\beta}(B)$ w.r.t. $\beta$). For all other configurations $B$, 
 $\pi_{\beta_T}(B)$ decreases with $T$ for large $T$. 
 Hence, $\sum_{T=0}^{\infty} \sum_{B \in \mathcal{B}} |\pi_{\beta_{T+1}}(B)-\pi_{\beta_T}(B)| < \infty$. In order to see this, let us assume without loss of generality that, $B \in \arg \max_{B^{'} \in \mathcal{B}} h(B^{'})$ so that $\pi_{\beta_T}(B)$ monotonically increases with $T$ for all $T \geq T^{'}$. But $0 \leq \pi_{\beta_T}(B) \leq 1$ for all $T$. Hence, $\{ \pi_{\beta_T}(B) \}_{T \geq 1}$ converges and $\sum_{T=T^{'}}^{\infty}  |\pi_{\beta_{T+1}}(B)-\pi_{\beta_T}(B)| =\sum_{T=T^{'}}^{\infty}  (\pi_{\beta_{T+1}}(B)-\pi_{\beta_T}(B))=\lim_{T \rightarrow \infty} \pi_{\beta_T}(B)- \pi_{\beta_{T^{'}}}(B)< \infty$. Similar claims can be made for all $B \in \mathcal{B}$. Hence, we can claim that $\sum_{T=0}^{\infty} \sum_{B \in \mathcal{B}} |\pi_{\beta_{T+1}}(B)-\pi_{\beta_T}(B)| < \infty$.
 
 Hence, by \cite[Chapter~$6$, Theorem~$8.3$]{breamud99gibbs-sampling}, $\{V(t)\}_{t \geq 0}$ is strongly ergodic.  
 The expression for the limiting distribution is straightforward to derive.

\section{Proof of Theorem~\ref{theorem:convergence-virtual-cache-update-learning}}\label{appendix:proof-of-strong-ergodicity-for-varying-inverse-temperature-learning}
Note that, at a given fixed time $t=T$, given the instantaneous value of estimates, the instantaneous transition probability matrix for 
$\{V(t)\}_{t \geq 0}$ Markov chain, 
$Q^{(T)}$, will have a stationary probability distribution. Also, if we assume that there exists exactly one configuration in the set 
$\arg \max_{B \in \mathcal{B}}h(B)$, then we can say that $\lim_{T \rightarrow \infty} |Q^{(T)}-Q^*|=0$,  where $Q^*$ has a stationary 
distribution which assigns probability $1$ on $\arg \max_{B \in \mathcal{B}}h(B)$, and $Q^*$ is  ergodic. Hence, by 
\cite[Chapter~$6$, Theorem~$8.5$]{breamud99gibbs-sampling}, the Markov chain  $\{V(t)\}_{t \geq 0}$ is 
strongly ergodic.

\renewcommand{\thesubsection}{\Alph{subsection}}

\end{document}